\newtheorem{remark}{Remark}{}
{}
\newtheorem{lemma}{Lemma}[]
\newtheorem{proposition}{Proposition}{}
\newtheorem{theorem}{Theorem}{}
\DeclareMathOperator*{\argmin}{argmin}
\newcommand{\x}{\mathsf{x}}
\newcommand{\LL}{\mathcal{L}}
\begin{document}

\title{Density-Driven Optimal Control for Efficient and Collaborative Multi-Agent Non-Uniform Coverage}

\author{Sungjun Seo,~\IEEEmembership{Member,~IEEE,} and Kooktae Lee,~\IEEEmembership{Member,~IEEE}
\thanks{S. Seo and K. Lee are with the Department of Mechanical Engineering, 
New Mexico Institute of Mining and Technology, Socorro, NM 87801, USA 
(e-mail: sungjun.seo@student.nmt.edu; kooktae.lee@nmt.edu).}

\thanks{This work was supported in part by the U.S. National Science Foundation 
(NSF) CAREER Award under Grant CMMI-DCSD-2145810.}

\thanks{This is the author's accepted manuscript (AAM) of the paper 
accepted for publication in IEEE Transactions on Systems, Man, and Cybernetics: Systems. 
The final published version is available at 
https://doi.org/10.1109/TSMC.2025.3622075. 
© 2025 IEEE.}
}


\IEEEpubid{
}

\maketitle

\begin{abstract}

This paper addresses the critical challenge of non-uniform area coverage in multi-agent systems, where certain regions require higher priority based on mission objectives. Traditional uniform coverage methods are often inadequate in many area coverage scenarios, as they may overlook real-world complexities where uniform coverage is neither feasible nor desirable. Although several non-uniform coverage methods exist, they frequently lack guarantees of optimality or fail to consider essential real-world constraints such as agent dynamics, the number of agents, available operation time, and decentralized control.

To overcome these limitations, a novel control scheme called Density-Driven Optimal Control (D$^2$OC) is proposed. The key innovation of D$^2$OC lies in the integration of optimal transport theory with multi-agent control, allowing agents to dynamically adjust their coverage according to a reference density distribution that reflects the mission’s priorities. Optimality is guaranteed by solving an optimization problem that incorporates the aforementioned real-world constraints. The control law is derived from the Lagrangian associated with the objective cost, utilizing optimal transport for both linear and nonlinear systems. Guarantees for global optimality and the existence of the optimal control input for linear systems are provided through an analytic solution. Additionally, an efficient data-sharing algorithm for decentralized control among multiple agents is proposed.

To evaluate the efficacy of the proposed control scheme, various simulation results are presented to compare its performance with existing methods.

\end{abstract}

\begin{IEEEkeywords}
Multi-Agent System, Non-Uniform Area Coverage, Collaborative Control, Decentralized Multi-Agent Coverage
\end{IEEEkeywords}

\section*{Code Availability}
The MATLAB implementation of the proposed Density-Driven Optimal Control (D2OC) 
framework is publicly available at: 
\url{https://github.com/kooktaelee/D2OC}.

\section{Introduction}
\IEEEPARstart{I}{n} recent years, multi-agent systems have gained widespread attention due to their capability to tackle complex and large-scale tasks that a single agent cannot efficiently manage. These systems, composed of multiple agents working collaboratively, have shown great promise in a variety of applications. From search-and-rescue missions in disaster-stricken areas to large-scale environmental monitoring and military reconnaissance, multi-agent systems are crucial for tasks that require coordinated control and dynamic resource allocation \cite{drew2021multi, kouzehgar2020multi, huang2022multirobot}. Their ability to distribute workload, provide redundancy, and adapt to dynamic environments makes them an invaluable tool for both civilian and military operations. One of the most pressing challenges in these applications is achieving non-uniform area coverage, where certain regions demand higher priority based on mission objectives, environmental factors, or threat levels. Addressing this challenge requires a shift from traditional coverage approaches to more sophisticated strategies that account for varying levels of importance across different regions.

Conventional coverage algorithms, such as uniform coverage path planning (CPP), typically focus on distributing agents evenly across a given area \cite{oksanen2009coverage, avellar2015multi}. While this approach is effective in some scenarios, it fails to address real-world complexities where uniform coverage is neither feasible nor desirable. In practice, agents are often constrained by factors such as limited communication ranges, restricted energy resources, and finite operation time. These constraints demand more advanced solutions that can dynamically adjust to the needs of the mission. In environments where certain regions hold higher priority or require more frequent exploration, non-uniform coverage strategies become essential. These strategies must ensure that agents concentrate their efforts on high-priority areas while still maintaining coverage of the overall domain. By incorporating adaptive techniques that respond to mission-specific requirements, non-uniform coverage approaches offer a more practical and efficient alternative to uniform CPP methods.

Recent research on non-uniform coverage has proposed several methods to address these challenges. While these methods have shown promise, they often fail to account for critical operational constraints such as agent dynamics and number, decentralized communication, and operation time. The limitations of these existing approaches will be discussed together with uniform coverage in more detail in the following literature survey.

\textbf{Literature Survey} — Various approaches have been proposed for multi-agent area coverage, which can be broadly categorized into the following four classes:

1) \textit{Multi-Robot Coverage Path Planning (MRCPP):} Traditional MRCPP algorithms divide the area into smaller subregions using methods like cell decomposition \cite{azpurua2018multi, chi2021reusable}. These methods focus on the uniform distribution of agents and aim at optimizing path efficiency for complete coverage. While effective in ensuring uniform area exploration, MRCPP approaches do not account for regions requiring prioritized attention, which is crucial for non-uniform missions. Additionally, workload balancing among heterogeneous agents remains a significant challenge, especially in time-sensitive applications like search-and-rescue \cite{li2019multi}.

2) \textit{Ergodic Control and Spectral Multiscale Coverage (SMC):} The ergodic control framework aims to match the temporal distribution of agents to a predefined probability density function (PDF) representing regions of interest for non-uniform coverage. SMC introduced by Mathew et al. \cite{mathew2009spectral}, leverages the Fourier series to quantify ergodicity. While the SMC approach has been applied to scenarios such as exploration and surveillance \cite{surana2012coverage}, its primary limitation lies in the time taken to achieve ergodicity. In practice, finite-time constraints often prevent agents from converging to the ideal ergodic state, making it less suitable for time-critical missions.

3) \textit{Heat Equation Driven Area Coverage (HEDAC):} This approach models the area as a potential field governed by the heat equation, where the agents move along gradients derived from the field. The HEDAC method has demonstrated success in autonomous spraying and inspection tasks \cite{ivic2022constrained, ivic2019autonomous}. However, the lack of direct integration between agent dynamics and the heat equation parameters complicates the control of agent motion, leading to suboptimal performance in dynamic environments.

4) \textit{Density-Driven Control (D$^2$C):} Rooted in optimal transport (OT) theory, D$^2$C dynamically adjusts agent trajectories to match the priority distribution of a target area. By solving an OT problem, agents are guided towards regions of higher importance based on a weighted density function. D$^2$C, first proposed by Kabir et al. \cite{kabir2020receding}, offers a promising solution for non-uniform coverage, as it balances agent distribution with dynamic task prioritization. Recent extensions \cite{kabir2021wildlife}, \cite{lee2022density}, and \cite{seo2023density} have explored wildlife monitoring, 
decentralized implementations, and collision avoidance. However, its reliance on heuristic path planning rather than direct control over agent dynamics limits its effectiveness in real-time scenarios where responsiveness is crucial.

\textbf{Contribution} --- In this paper, a new multi-agent control scheme, referred to as Density-Driven Optimal Control (D$^2$OC), is proposed. In any given mission, one may construct a reference density distribution indicating the relative importance or priority in the domain. The key idea behind D$^2$OC is then to drive multiple agents such that their time-averaged behavior becomes as close to a given reference density distribution as possible. This is fundamentally different from the existing uniform coverage methods, such as MRCPP. Although other non-uniform coverage approaches like SMC, HEDAC, and D$^2$C have been developed, either optimality is not guaranteed in their methods, or they are impractical due to the lack of physical constraint considerations in their plans.

The novelty of our approach lies in the explicit integration of optimal transport theory with multi-agent control, enabling dynamic coverage while taking into account agent dynamics and number, operation times, and communication range. This is the first time that optimal transport theory has been applied to non-uniform coverage problems in a way that guarantees both practical applicability and optimality under real-world constraints.

The advantages of our work can be summarized as follows:
1) We formulate an optimization problem based on OT theory, which allows the derivation of an optimal control law for both linear and nonlinear agent dynamics. This optimization guarantees optimality and ensures that the derived control input minimizes resource wastage, something not considered in most existing coverage schemes.
2) Computational efficiency – Our method analytically derives the optimal control input without the need for complex numerical solvers (e.g., nonlinear programming), making it computationally efficient compared to ergodic-based approaches that involve heavy numerical computations.
3) Decentralized control method – A more efficient data-sharing method is proposed for decentralized control, allowing agents to coordinate their efforts without requiring a global communication network. This is crucial in real-world missions where maintaining continuous communication can be challenging.
4) Energy flexible control – Unlike other methods that assume uniform energy capacity among agents, D$^2$OC accommodates different energy levels or operation times among agents, optimizing coverage even under these disparities.
5) Collaborative coverage with heterogeneous platforms – Due to the flexibility of our approach, D$^2$OC supports heterogeneous agents with different dynamics or capabilities, making it highly versatile for applications involving diverse agent platforms.

In summary, our work contributes a novel, non-uniform coverage control scheme based on OT theory that ensures optimality, practicality, and adaptability in multi-agent systems. It fills a critical gap in existing research by addressing limitations in agent coordination, energy constraints, and the need for decentralized, dynamic control.

\section{Preliminary and Problem description}\label{sec: prelim and prob descrip.}
\noindent{Notations}: The sets of real numbers and natural numbers, respectively, are denoted by $\mathbb{R}$ and $\mathbb{N}$. Moreover, $\mathbb{N}_0:=\mathbb{N}\cup \{0\}$. The symbol $\mathbb{R}^n$ represents an $n$-dimensional real vector space. Furthermore, $\mathbb{R}^{n\times m}$ denotes the set of real matrices of dimension $n\times m$. The symbol $\mathbf{I}_n$ stands for an $n\times n$ identity matrix. A transpose of a matrix is denoted by the upper script $^\top$, and the Kronecker product is represented by $\otimes$. The symbol diag$(d_1,...,d_n)$ denotes a diagonal matrix of a dimension $n\times n$ with the diagonal entries $d_1,..., d_n$. The set of pairs $\{(p_i,w_i)\ |\ p_i \in \mathbb{R}^n\text{ and }w_i \in \mathbb{R}\}$ implies a discrete distribution, where $p_i$ and $w_i$, respectively, are the position and corresponding weight of the $i$-th element in the discrete distribution. Positive definteness of the matrix $R$ and positive semidefiniteness of the matrix $Q$ are denoted by $R\succ \mathbf{0}$ and $Q\succeq \mathbf{0}$, respectively.

\subsection{Illustrative Example for Non-Uniform Coverage}

\begin{figure}[!h]
    \centering
    \subfloat[]{
    \includegraphics[width=0.32 \linewidth]{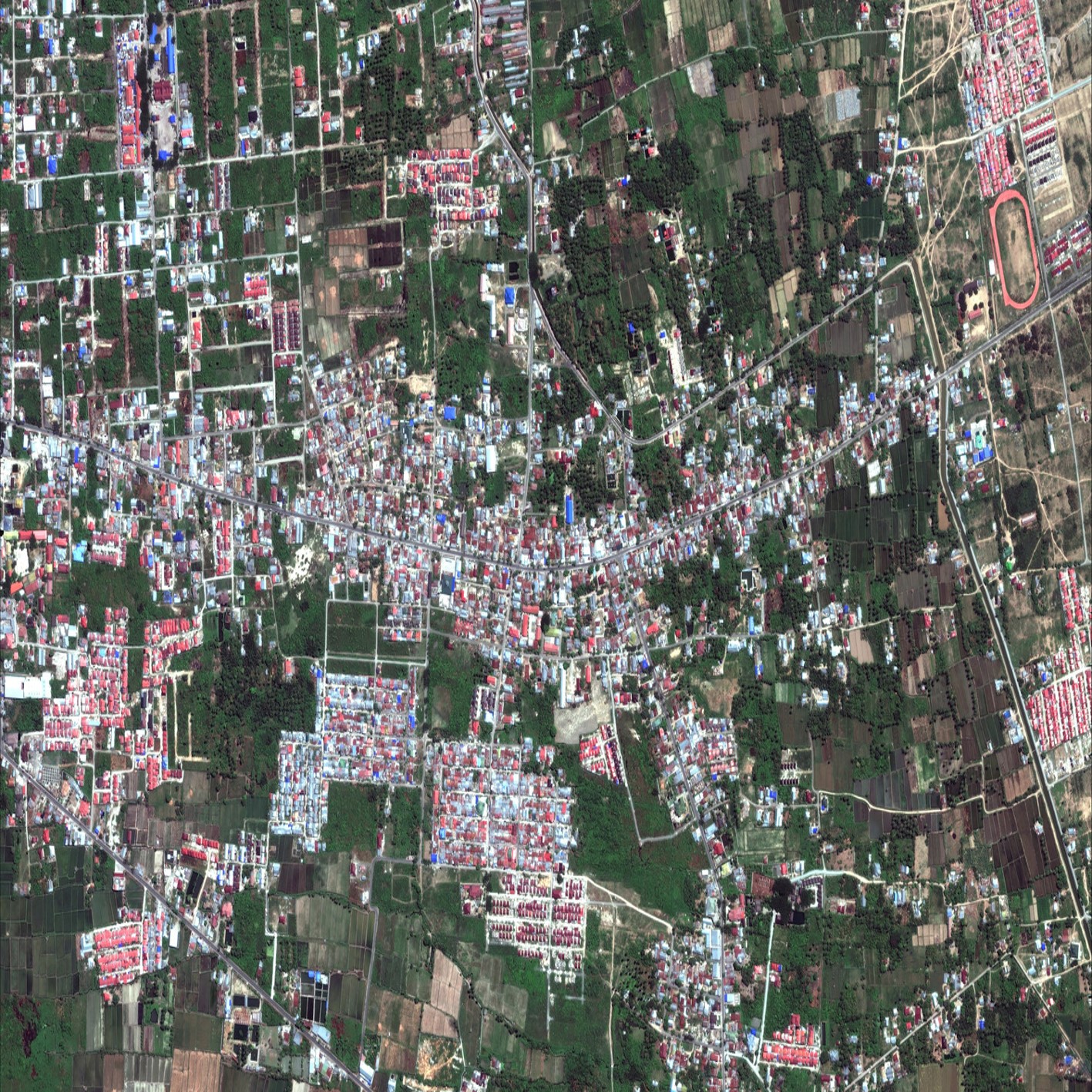}}
    \subfloat[]{
    \includegraphics[width=0.32 \linewidth]{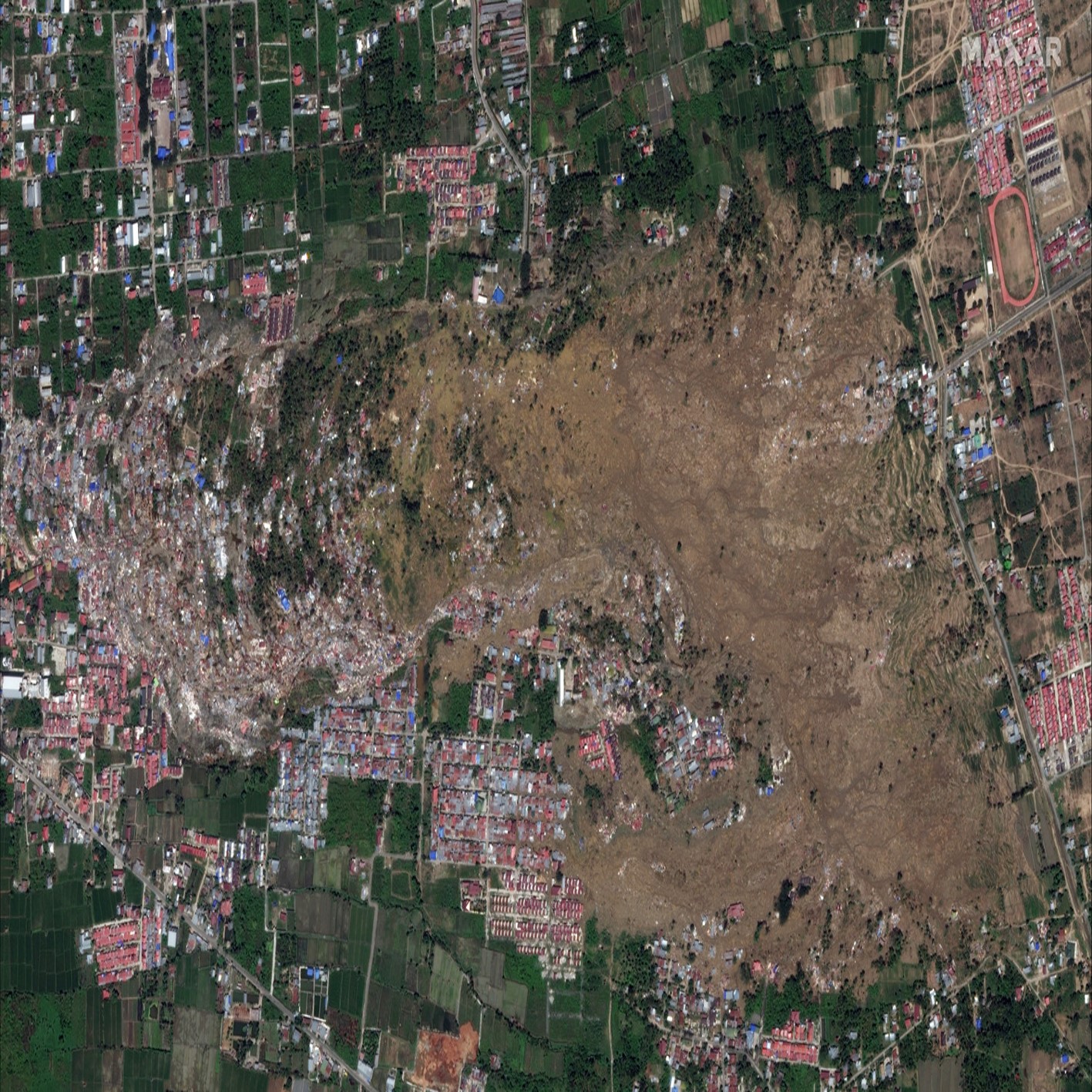}}
    \subfloat[]{
    \includegraphics[width=0.32\linewidth]{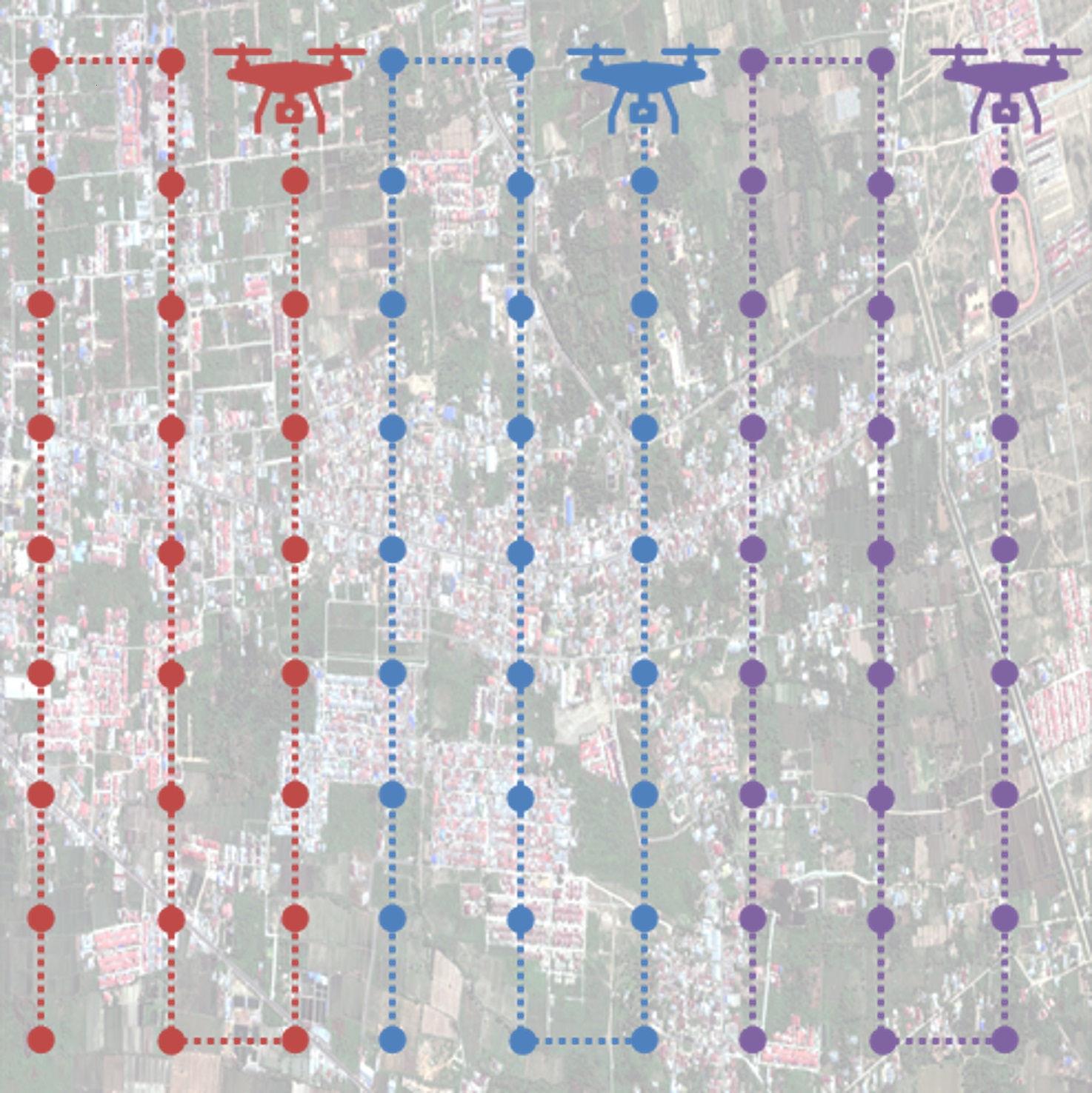}}
    \\\subfloat[]{
    \includegraphics[width=0.32\linewidth]{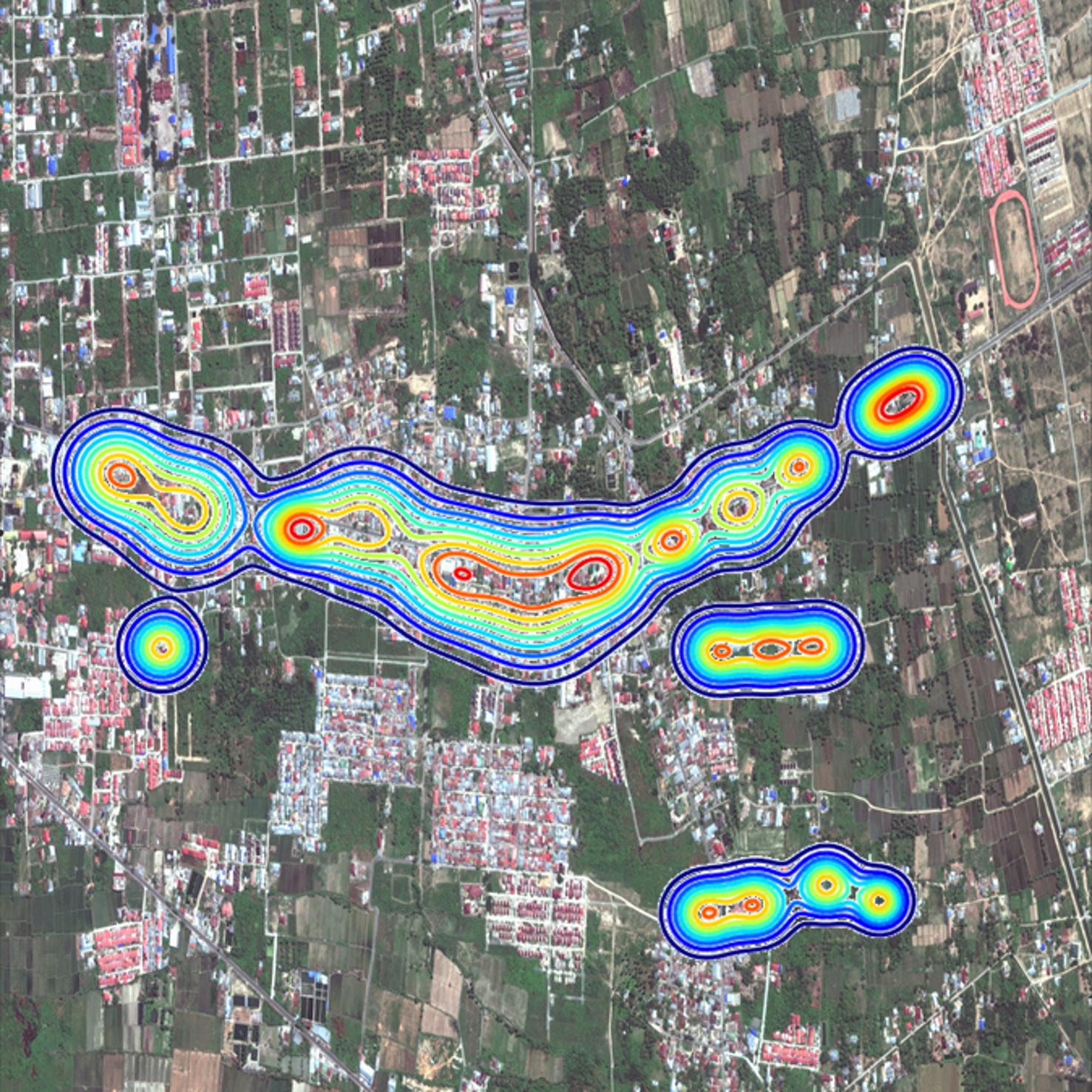}}
    \subfloat[]{
    \includegraphics[width=0.32\linewidth]{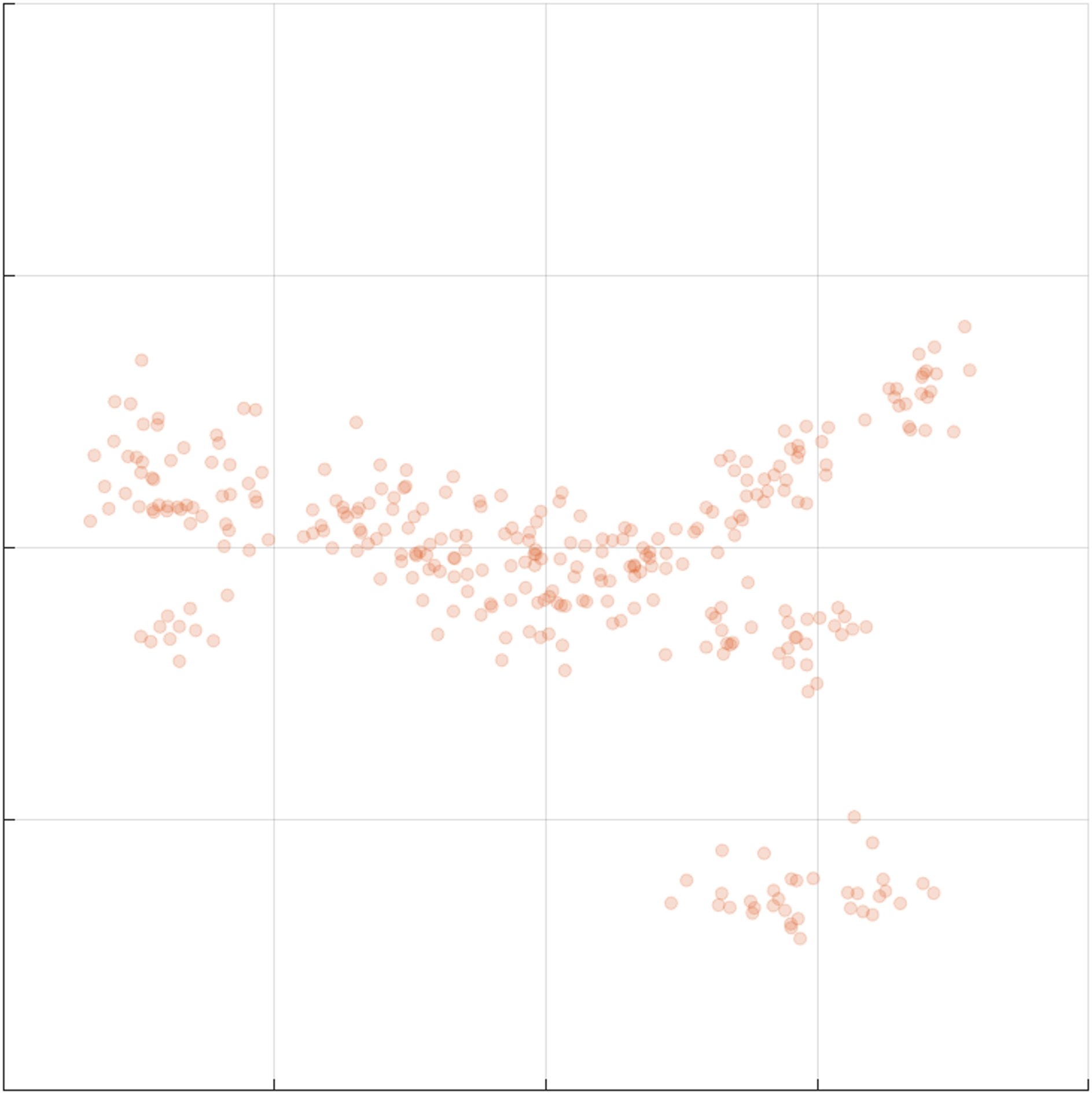}}
    \subfloat[]{
    \includegraphics[width=0.32\linewidth]{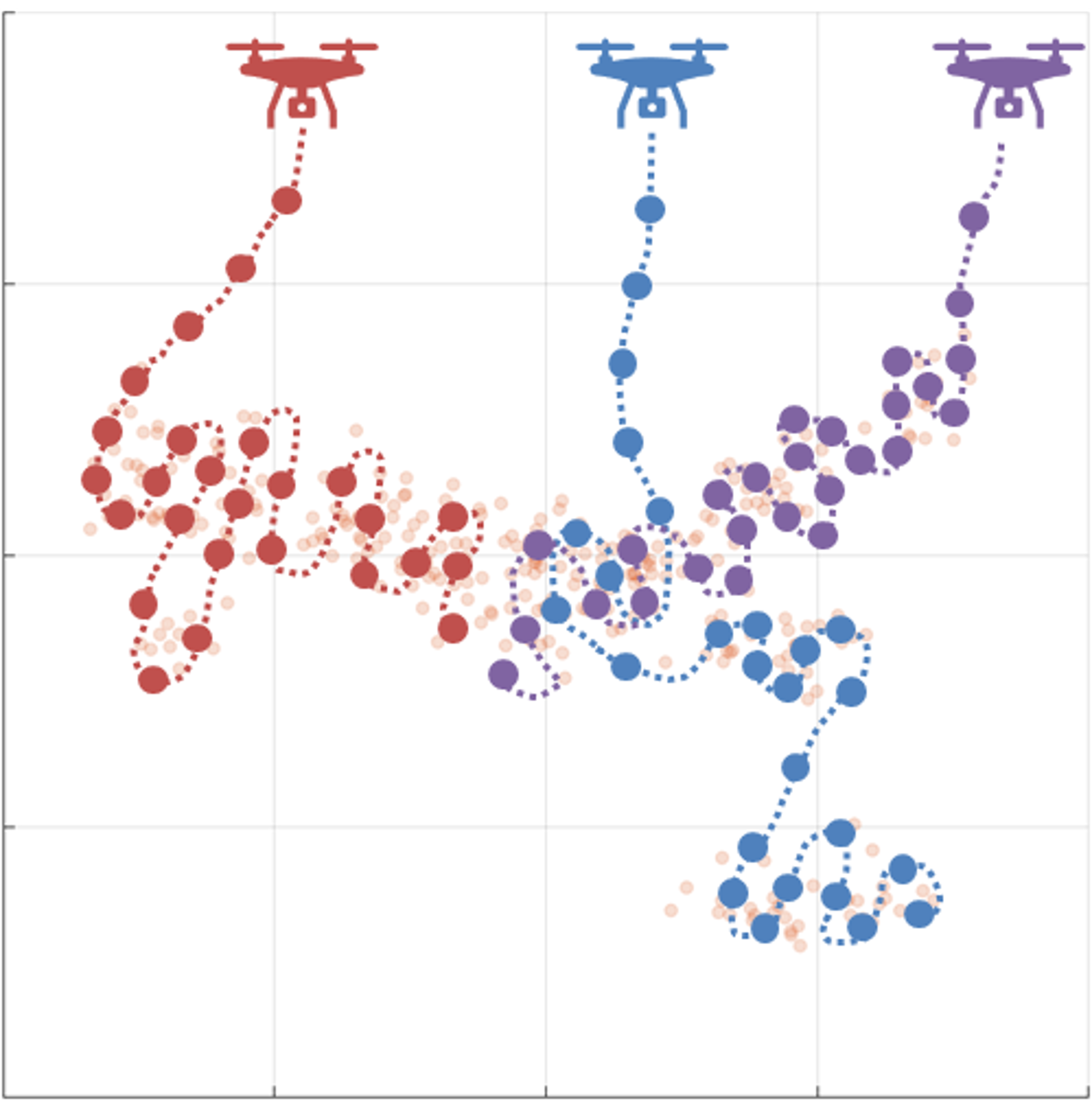}}
    \caption{Uniform vs. Non-Uniform Coverage: Satellite image of Petabo, Indonesia \cite{Nishan2018}. (a) before the tsunami; (b) after the tsunami; (c) uniform coverage path planning for a search and rescue mission; (d) a probability map of the victims' location based on pre-acquired information; (e) discretized point clouds generated from the probability map; (f) non-uniform coverage path utilizing the probability (or density)-based data.}
    \label{fig: Concept_Prob}
\end{figure}

Consider a large-scale search and rescue mission aimed at urgently locating victims after a catastrophic disaster, whether man-made or natural. As shown in Figs. \ref{fig: Concept_Prob}(a) and (b) from \cite{Nishan2018}, satellite images of Petabo, Palu City, Central Sulawesi, Indonesia, reveal the area before and after the 7.5-magnitude earthquake-triggered tsunami in 2018. In such scenarios, deploying an autonomous multi-agent system with onboard sensors is essential for locating survivors. One approach is uniform area coverage, as shown in Fig. \ref{fig: Concept_Prob}(c). However, given the urgency and limited resources---such as the number of agents, battery life, and communication range---uniform coverage may not maximize the rate of victim detection, leading to inefficiencies.

Rather than a uniform search, a probability-based approach—guided by factors such as disaster impact, residential locations, and population density—is more desirable, as seen in Fig. \ref{fig: Concept_Prob}(d). The probability (or density) map can be represented by point clouds (or sample points), as in Fig. \ref{fig: Concept_Prob}(e), facilitating an efficient search mission with a higher chance of detecting victims. A multi-agent system can then be deployed and driven by a control law such that the agents' discrete-time trajectories match the reference density map, as shown in Fig. \ref{fig: Concept_Prob}(f).

Given identical conditions (same multi-agent system and map), it is expected that non-uniform coverage (trajectories in Fig. \ref{fig: Concept_Prob}(f)) will yield a more desirable result—higher detection rates of victims—compared to uniform coverage (Fig. \ref{fig: Concept_Prob}(c)). In this context, this paper investigates a multi-agent non-uniform area coverage method for any given reference density map. This approach has potential applications in various domains, including environmental monitoring, smart farming, infrastructure inspection, and planetary exploration.

To solve this problem mathematically, we introduce Optimal Transport theory, which optimally allocates resources between two distributions, providing a control scheme that ensures the agents’ trajectories closely match the reference density map while considering physical constraints such as agent dynamics and number, operation time, and communication range limits.

\subsection{Optimal Transport Theory and Wasserstein Distance}
Optimal Transport (OT) is a research field studying optimal transportation or allocation of resources \cite{monge1781memoire, villani2008optimal}. It was first conceptualized to investigate the minimum cost of turning one pile into the other. Alternatively, it can be interpreted as measuring a distance between two probability density functions (PDFs) similar to measuring a distance between two points, such as a Euclidean distance.

 Given two discrete probability distributions, $\{(y_i, \alpha_{i})\}$ and $\{(q_j, \beta_{j})\}$, where 
 $i=1,...,M$ and $j=1,...,N$, the problem of minimizing the cost of transportation from $\{(q_j, \beta_{j})\}$ to $\{(y_i, \alpha_{i})\}$ in the OT theory is formulated by
 
\begin{equation}
\begin{aligned}
	\min_{\gamma_{ji}} \sum\nolimits_{j,i} \gamma_{ji}C_{ji}, \quad \gamma_{ji},\,C_{ji} \geq 0, \quad \forall i, j,
\end{aligned}\label{eqn: minimization problem in OT}
\end{equation}
where $\gamma_{ji}$ is the transportation plan from point $q_j$ to point $y_i$ and $C_{ji}$ is the transportation cost (e.g., distance) from point $q_j$ to point $y_i$.
The minimizer $\gamma^{*}$ is obtained as an optimal transportation plan that minimizes the objective function \eqref{eqn: minimization problem in OT}. 
This optimal transport problem is subject to the following constraints:
\begin{align}\label{eqn: OT constraint}
\sum_{i=1}^M\gamma_{ji} = \beta_{j}, \,\, \forall j,\,\sum_{j=1}^N\gamma_{ji} = \alpha_{i}, \,\,\forall i, \,\, \sum_{i=1}^{M}\alpha_i = \sum_{j=1}^{N}\beta_j = 1.
\end{align}

The first and second constraints are due to the weight of each point, and the last one is for mass conservation.
When the cost $C_{ji}$ is defined by the squared Euclidean distance between two points $y_i$ and $q_j$, the square root of the minimum cost is called the 2-Wasserstein distance, denoted by
\begin{align}
		\mathcal{W}_2 = \left(\min\nolimits_{\gamma_{ji}\geq0} \sum\nolimits_{i,j}\gamma_{ji} \lVert y_i - q_j \rVert^2\right)^{\frac{1}{2}},
\end{align} with the constraints in \eqref{eqn: OT constraint}.
The Wasserstein distance, as implied by its name, satisfies essential properties of distance metrics such as nonnegativity and triangle inequality. Furthermore, the 2-Wasserstein distance measures the dissimilarity between two different distributions.
In this sense, the 2-Wasserstein distance is employed as a tool to achieve density-driven control.

\subsection{Multi-Agent Coverage Strategy Using Wasserstein Distance}

For notational ease, points in the agents' trajectories and in the reference distribution are hereafter referred to as \textit{agent-points} and \textit{sample-points}, respectively. 
In this study, we aim to develop the optimal control scheme that drives the multiple agents so that the distribution of agent-points will closely match the distribution of sample-points.
The Wasserstein distance is employed to achieve the alignment between the agent-point distribution and the sample-point distribution. 
 Let the indices be defined as $r\in\mathbb{N}$ for the agent, $j\in\mathbb{N}$ for the sample-point, and $k\in\mathbb{N}$ for the discrete-time step. The agent-point distribution and the sample-point distribution can then be represented by $\{({}^{r}y^{k},{}^{r}\alpha^{k})\ |\ r=1,..., L \text{ and }k=1,...,{}^rM\}$ and $\{(q_j,\beta_j^0)\ |\ j=1,..., N\}$, respectively. We provide more details about each symbol as follows. Here, the symbol ${}^ry^k$ represents the spatial coordinate (or location) of agent $r$ at time $k$, and ${}^r\alpha^k$ denotes its corresponding weight. Similarly, $q_j$ represents the spatial coordinates of the $j$-th sample-point, and $\beta^0_j$ is its corresponding weight. The symbol ${}^rM$ refers to the total number of agent-points for agent $r$ calculated from the available operation time of the given agent platform divided by the sampling time.
 
 While the sample-point distribution is predefined, as information about sample-points is given in advance, the agent-point distribution has not yet been established. The spatial coordinate of the agent-point (or the position of the agent), ${}^{r}y^{k}$, evolves from its initial location, ${}^{r}y^{0}$, over its operation time, ${}^r M$, based on the agent's control inputs and dynamics. Corresponding weights of the agent-point, ${}^r\alpha^k$, can be assigned based on energy consumption between consecutive time intervals. 
 For example, a uniform weight can be considered as ${}^r\alpha^k=\frac{1}{\sum_r{}^{r}M}$ under the assumption that the energy (or available operation time) of the agent decreases linearly over discrete time. A more advanced energy consumption model (i.e., nonlinear) is also directly implementable to account for the time-varying consumption through the non-uniform weight.
 These weights must satisfy the last constraint in \eqref{eqn: OT constraint}.
 
 Finding the optimal control inputs that minimize the Wasserstein distance between the agent-point distribution $\{({}^{r}y^{k},{}^{r}\alpha^{k})\}$ and the sample-point distribution $\{(q_j,\beta_j^0)\}$ is computationally burdensome due to the large numbers both in sample- and agent-points, as well as the complex constraints requiring a numerical solver. To avoid this, we propose that, at each time step, local optimal transport occurs between a subset of sample-points and the current agent-point. In this way, the agents can focus on minimizing a local Wasserstein distance, with the cumulative effect over time leading to the desired outcome for density-driven control.

 In this OT framework, the weight of the sample-points, $\beta_j^k$, depends on the time step $k$. This reflects the coverage progress, as indicated by the decrease in $\beta_j^k$ when agents explore areas for non-uniform coverage. Specifically, the weights of sample-points near areas that an agent has covered will decrease according to a predefined rule. It is important to note that the positions of the sample-points remain stationary and are the same for all agents (i.e., all agents initially have identical information about the locations of sample-points).

Since the proposed D$^2$OC scheme is developed for decentralized control, each agent may progress differently in terms of area coverage, leading to variations in the sample-point weights in their memory unless they share and update this information. To account for this, the weight of each sample-point is indexed by the agent, denoted as ${}^{r}\beta_j^{k}$.

The optimization problem is then formulated to minimize the local Wasserstein distance between a subset of sample-points and the current agent-point. The detailed process and results are explained in the following section.

\section{Density-Driven Optimal Control (D$^2$OC)}\label{sec: D^2OC}
The proposed D$^2$OC scheme consists of three stages: Stage A - Optimal control, Stage B - Weight update, and Stage C - Weight sharing. 

In Stage A, the optimal control input is computed using the local sample-points (as depicted in Fig. \ref{fig: D2OC_schematic}(a)) chosen by the predefined priority index. After the agent moves to a new position by applying the optimal control, the weight of the sample-points near the agent is deducted in Stage B. Stage B is to update the area coverage progress by reducing the weights of sample-points in the areas swept by the agent. The deducted weights on sample-points will be transported to the agent as illustrated in Fig. \ref{fig: D2OC_schematic}(b).
Stages A and B are executed by each agent for fully decentralized control, which does not require global information such as other agents' states. For a collaborative multi-agent behavior, the agent shares the weight information in Stage C with other nearby agents within the communication range. In each time step, these stages are sequentially applied to the agents and repeated until all the weights of the sample-points are transported to the agent-points. In the following subsections, each stage will be discussed in detail.
\begin{figure}
    \centering
    \includegraphics[width=0.8\linewidth]{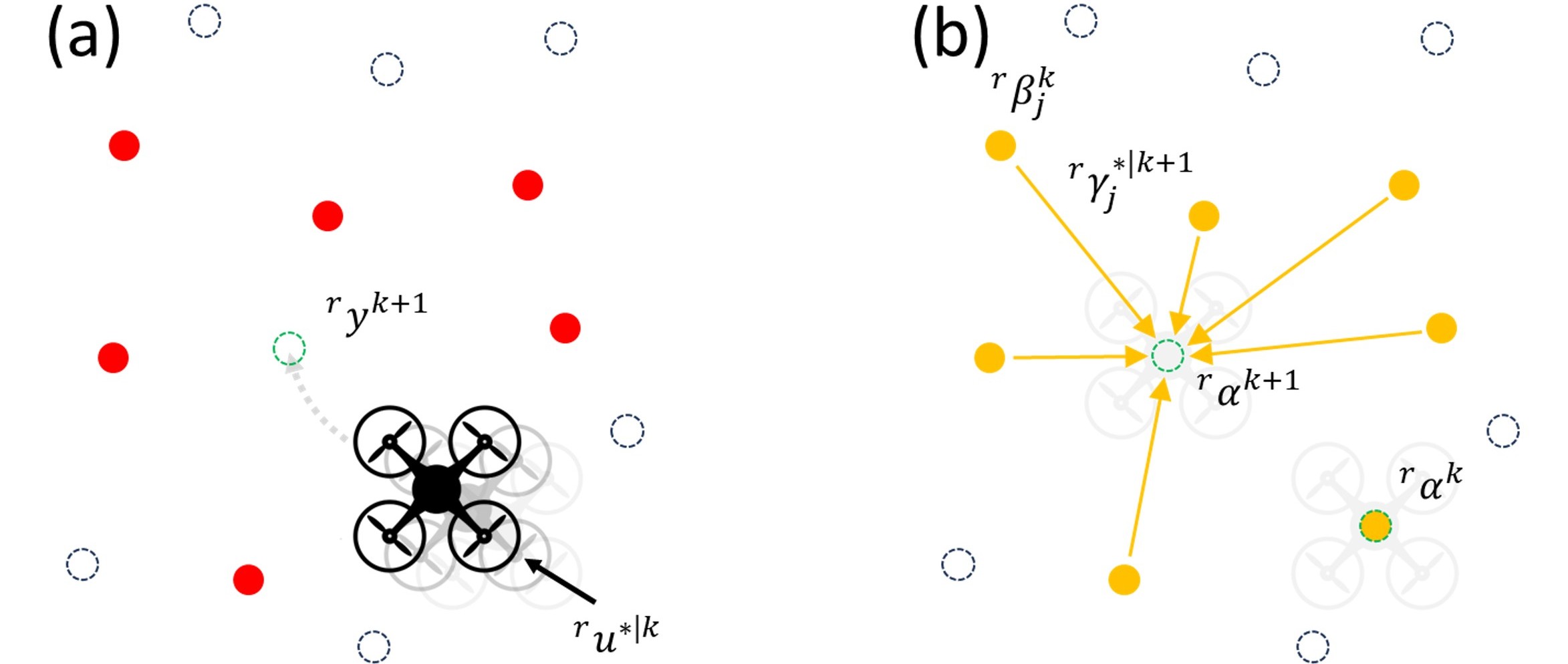}
    \caption{Schematic drawing of each stage in the D$^2$OC scheme: (a) Stage A - Optimal control stage: Local sample-points (red circles) are chosen from sample-points (hollow black circles). Optimal control input ${}^{r}u^{*|k}$ is applied and the agent moves to the new location ${}^{r}y^{k+1}$; (b) Stage B - Weight update stage: The optimal transportation ${}^{r}\gamma^{*|k+1}_j$ taken out of the remaining weight ${}^{r}\beta^k_j$ is transported to the newly created agent-point (hollow green circles).}
    \label{fig: D2OC_schematic}
\end{figure}
\subsection{Optimal Control Stage}
The optimal control stage is further divided into two steps: the local sample-points selection and the optimal control step. In the local sample-points selection step, the agent selects some sample-points based on the weight-normalized Euclidean (wnE) distance as follows:
\begin{align}
     d_{wnE}(j,{}^{r}y^k) = \frac{|| q_{j}-{}^{r}{y}^{k}||}{{{}^{r}\beta^k_{j}}},\label{eqn: weight-normalized distance}
\end{align}
where ${}^{r}\beta^k_j$ is the remaining weight of the sample-point $q_j$ at time $k$ and ${}^{r}y^k$ is the location of agent $r$. Based on this metric, sample-points that are closer to the agent and have a higher weight are chosen as local sample-points.
To specify how many sample-points should be selected for the local sample-points, we propose that the total amount of weights in the local sample-points should be the same as the weight of the single agent-point to satisfy the mass conservation law of the OT theory. In each time step, the weight of the single agent-point, ${}^{r}\alpha^k$, is calculated by the consumed energy in the given discrete-time interval divided by the initially available energy. 
It is worth noting that the energy flexibility of D$^2$OC is achievable through this idea -- the incorporation of the initially available energy, which may vary from agent to agent depending on the given agent platform.

The set of local sample-points at time $k$ is denoted by ${}^{r}S^k$, which will be used to derive optimal control for D$^2$OC. 
To clearly show the process and result, the linear and nonlinear cases are separated. Since this stage is identically applied across all agents in a decentralized control setup, the agent index $r$ is omitted except for the case where the meaning is unclear.
\subsubsection{Linear Time-Invariant system case}
Consider a discrete-time Linear Time-Invariant (LTI) system given by 
\begin{equation}
    \begin{aligned}
    \mathsf{x}^{k+1} = A\mathsf{x}^{k}+Bu^{k},\ y^{k} = C\mathsf{x}^{k}, 
\end{aligned}\label{eqn: LTI system}
\end{equation}
where $\mathsf{x}\in \mathbb{R}^n$ and $u\in \mathbb{R}^m$ are a state vector and control input of an agent in the multi-agent system, respectively, and $y$ is the position of the agent in the Cartesian coordinate space. The proposed optimal control input can be applied to each agent in a decentralized control fashion. To align the trajectories of the agent closely with the reference distribution, the optimization problem is formulated by utilizing the Wasserstein distance associated with the local sample-points with a given horizon length $T$ as follows:
\begin{equation}
\begin{aligned}
	&\min_{u} J = \Phi(\mathsf{x}^{k+T})
 \\&+\sum_{i=k}^{k+T-1}\left(\dfrac{1}{2}(\mathcal{W}^{i|k})^2 + \dfrac{1}{2}(\mathsf{x}^{i})^{\top}Q\mathsf{x}^{i} +  \dfrac{1}{2}(u^{i})^{\top}Ru^{i}\right)
  \\
 &\text{subject to} \quad \mathsf{x}^{k+1} = A\mathsf{x}^k + Bu^k,
\end{aligned}\label{eqn: minimization problem}
\end{equation}
\vspace{-0.025in}
{\allowdisplaybreaks
\begin{align*}
&\begin{aligned}
\text{where}     &\ \  Q\succeq \mathbf{0},\ R\succ \mathbf{0},\ (\mathcal{W}^{i|k})^2= \min_{\gamma_{j}\geq0} \sum_{j \in \mathcal{S}^{k}} \gamma_j ||C\mathsf{x}^{i}-q_j||^{2},
     \\&\ \  \Phi(\mathsf{x}^{k+T})=\dfrac{1}{2}(\mathcal{W}^{k+T|k})^2 + \dfrac{1}{2}(\mathsf{x}^{k+T})^{\top}Q\mathsf{x}^{k+T}.
 \end{aligned}
 \end{align*}}
In the cost function of \eqref{eqn: minimization problem}, $\Phi(\x^{k+T})$ is the terminal cost over the given horizon, and $\mathcal{W}^{i|k}$, referred to as the \textit{local Wasserstein distance}, is the Wasserstein distance between the agent-point at index $i=k,\cdots, k+T-1$ and the local sample-points at time $k$. Notice that index $i$ is used to denote the multiple agent-points in the future, given the horizon length $T$. The third and fourth terms of the cost function, respectively, are to penalize the input and state vectors with a positive semidefinite matrix $Q$ and a positive definite matrix $R$.
Note that Stage A is implemented by each agent for fully decentralized control, and hence no global information is required in \eqref{eqn: minimization problem}.

To obtain the optimal control input for the D$^2$OC scheme, the following proposition is introduced first.
\begin{proposition}\label{prop: invertibility}
    Let the symmetric matrix E be defined by
    \begin{align}
        E =  \begin{bmatrix}
        E_{11} & E_{12} & \mathbf{0}_{} \\
        E_{12}^{\top} & \mathbf{0} & E_{23}\\
        \mathbf{0} & E_{23}^{\top} & E_{33}
        \end{bmatrix},\label{eqn: E}
    \end{align}
    where $E_{ij}$ is a real submatrix, $E_{11}\in \mathbb{R}^{n\times n}$ and $E_{33}\in \mathbb{R}^{m\times m}$ are square matrices, and $E_{12} \in \mathbb{R}^{n\times n}$ is invertible. Then, the matrix $E$ is invertible if $E_{11}$ is symmetric positive semidefinite and $E_{33}$ is positive definite. 
    
    Furthermore, the inverse of the matrix E is calculated by
    \begin{align}
        E^{-1} =  \begin{bmatrix}
        (E^{-1})_{11} & (E^{-1})_{12} & (E^{-1})_{13} \\
        (E^{-1})_{12}^{\top} & (E^{-1})_{22} & (E^{-1})_{23}\\
        (E^{-1})_{13}^{\top} & (E^{-1})_{23}^{\top} & (E^{-1})_{33}
        \end{bmatrix},\label{eqn: inverse of E}
    \end{align}
    where
    {\allowdisplaybreaks
    \begin{align*}
        (E^{-1})_{11}&={(E_{12}^{-1})}^{\top}E_{23}(E^{-1})_{33}E_{23}^\top E_{12}^{-1},
        \\(E^{-1})_{12}&=(E_{12}^{-1})^\top
        \\\phantom{(E^{-1})_{12}}&\phantom{=\ }-{(E_{12}^{-1})^\top}E_{23}(E^{-1})_{33}E_{23}^{\top}E_{12}^{-1}E_{11}{(E_{12}^{-1})^\top},
        \\(E^{-1})_{13}&=-{(E_{12}^{-1})^\top}E_{23}(E^{-1})_{33},
        \\(E^{-1})_{22}&=E_{12}^{-1}E_{11}{(E_{12}^{-1})^\top}
        \\\phantom{(E^{-1})_{22}}&\phantom{=\ }(E_{23}(E^{-1})_{33}E_{23}^\top E_{12}^{-1}E_{11}{(E_{12}^{-1})^\top}-\mathbf{I}_n),
        \\(E^{-1})_{23}&=E_{12}^{-1}E_{11}{(E_{12}^{-1})^\top}E_{23}(E^{-1})_{33},
        \\(E^{-1})_{33}&=(E_{33}+E_{23}^\top E_{12}^{-1}E_{11}{(E_{12}^{-1})^\top}E_{23})^{-1}.
    \end{align*}}
\end{proposition}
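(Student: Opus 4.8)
The plan is to prove both claims at once by solving the linear system $Ev=b$ for an arbitrary right-hand side $b=(b_1,b_2,b_3)^\top$, partitioned conformally with $v=(v_1,v_2,v_3)^\top$ where $v_1,v_2\in\mathbb{R}^n$ and $v_3\in\mathbb{R}^m$. The essential structural observation is that, although the $(2,2)$ block of $E$ is zero — so the usual Schur-complement reduction against a diagonal block is unavailable — the off-diagonal block $E_{12}$ is invertible. This lets me eliminate $v_1$ and $v_2$ through $E_{12}$ rather than through a diagonal block, collapsing the whole system onto a single $m\times m$ equation in $v_3$.

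Explicitly, the block rows of $Ev=b$ read $E_{11}v_1+E_{12}v_2=b_1$, $E_{12}^\top v_1+E_{23}v_3=b_2$, and $E_{23}^\top v_2+E_{33}v_3=b_3$. From the second row I would solve $v_1={(E_{12}^{-1})^\top}(b_2-E_{23}v_3)$ (using $E_{12}^\top$ invertible), and from the first row $v_2=E_{12}^{-1}(b_1-E_{11}v_1)$. Substituting both into the third row and collecting the $v_3$ terms yields $S v_3 = b_3 - E_{23}^\top E_{12}^{-1}b_1 + E_{23}^\top E_{12}^{-1}E_{11}{(E_{12}^{-1})^\top}b_2$, where $S:=E_{33}+E_{23}^\top E_{12}^{-1}E_{11}{(E_{12}^{-1})^\top}E_{23}$ is exactly the matrix whose inverse is claimed to be $(E^{-1})_{33}$.

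For invertibility I would first argue that $S\succ\mathbf{0}$. Writing $M:={(E_{12}^{-1})^\top}E_{23}$, the added term is $M^\top E_{11}M\succeq\mathbf{0}$ because $E_{11}\succeq\mathbf{0}$; adding this to $E_{33}\succ\mathbf{0}$ gives $S\succ\mathbf{0}$, so $S$ is invertible. Setting $b=0$ then forces $v_3=0$, whence $v_1=0$ and $v_2=0$ by back-substitution, proving that $\ker E=\{0\}$ and hence that $E$ is invertible. For the explicit inverse, I would instead keep $b$ arbitrary, set $v_3=S^{-1}(b_3 - E_{23}^\top E_{12}^{-1}b_1 + E_{23}^\top E_{12}^{-1}E_{11}{(E_{12}^{-1})^\top}b_2)$ from the displayed equation, then back-substitute to obtain $v_1$ and $v_2$ as linear functions of $(b_1,b_2,b_3)$; reading off the coefficient blocks gives the nine entries of $E^{-1}$, with $(E^{-1})_{33}=S^{-1}$ falling out immediately.

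I expect the only real difficulty to be bookkeeping rather than ideas: the reduction itself is short, but expanding $v_1$ and $v_2$ and matching coefficients produces the long expressions for $(E^{-1})_{11}$, $(E^{-1})_{12}$, and $(E^{-1})_{22}$, and care is needed with transposes. As a consistency check I would verify that the off-diagonal blocks obtained from the $v_1$- and $v_2$-rows are genuine transposes of one another — for instance, that the $b_1$-coefficient appearing in $v_2$ equals ${((E^{-1})_{12})^\top}$ — which follows from the symmetry of $E_{11}$ and $S$ and confirms that the computed $E^{-1}$ is itself symmetric, as it must be.
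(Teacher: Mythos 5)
Your proposal is correct and is essentially the same argument as the paper's: both perform block Gaussian elimination pivoting on the invertible off-diagonal block $E_{12}$ (the paper phrases this as reducing $E$ to row-echelon form, you phrase it as solving $Ev=b$ by substitution), and both reduce invertibility to that of the same Schur-type complement $S=E_{33}+E_{23}^\top E_{12}^{-1}E_{11}(E_{12}^{-1})^\top E_{23}$, which is positive definite since $E_{11}\succeq\mathbf{0}$ and $E_{33}\succ\mathbf{0}$. If anything, your write-up is slightly more complete, since you explicitly carry out the back-substitution yielding the nine blocks of $E^{-1}$ (all of which match the stated formulas), whereas the paper omits that derivation.
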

\begin{proof}
    The reduced row-echelon form (RREF) of the matrix $E$ is given by
    \begin{align}
        E \xrightarrow[]{\text{RREF}}
        \begin{bmatrix}
        \mathbf{I}_n & \mathbf{0} & (E_{12}^{\top})^{-1}E_{23}\\
        \mathbf{0} & \mathbf{I}_n & -E_{12}^{-1}E_{11}(E_{12}^{\top})^{-1}E_{23} \\
        \mathbf{0} & \mathbf{0} & E_{33}+E_{23}^{\top}E_{12}^{-1}E_{11}(E_{12}^{\top})^{-1}E_{23}
        \end{bmatrix}.\nonumber
        \end{align}
        Since the rank of the matrix remains the same even after the row reduction process, the matrix $E$ has full rank if and only if the submatrix $E_{33}+\{(E_{12}^{-1})^{\top}E_{23})\}^{\top}E_{11}\{(E_{12}^{-1})^{\top}E_{23}\}$ has full rank, which is the case if $E_{11}$ is symmetric positive semidefinite and $E_{33}$ is positive definite. 
        
        The inverse of the matrix $E$ is obtained by the row reduction process. 
        The detailed derivation process is omitted due to the limited space, although it is straightforward.
\end{proof}
Based on Proposition \ref{prop: invertibility}, the optimal control input for D$^2$OC of the LTI system is derived as follows.

\begin{theorem}\label{Theorem: LTI D2OC Optimal control}
    For any LTI system having the structure in \eqref{eqn: LTI system}, consider the optimal control problem \eqref{eqn: minimization problem} to achieve D$^2$OC.
    The optimal control input $u^{*|k}$ that minimizes the cost function \eqref{eqn: minimization problem} is obtained by
    \begin{align}
    \begin{aligned}
        &u^{*|k} = \begin{bmatrix}
        \mathbf{I}_m & \mathbf{0} & \cdots & \mathbf{0}
        \end{bmatrix}\bar{u}^k \text{ with }\\
        &\bar{u}^k = (E^{-1})_{13}^{\top}F_{1} + (E^{-1})_{23}^{\top}F_{2},
        \end{aligned}
        \label{eqn: optimal_u}
    \end{align}
    where $(E^{-1})_{13}^{\top}$ and $(E^{-1})_{23}^{\top}$ denote the submatrices of the inverse matrix $E^{-1}$, as defined in \eqref{eqn: inverse of E}, while the block matrix $E$ in the form of \eqref{eqn: E} consists of the following submatrices:
    {\allowdisplaybreaks
    \begin{align}
    &E_{11}=\bar{Q} \otimes \mathbf{I}_{T}, \ E_{23} = E_{32}^{\top} = B \otimes \mathbf{I}_{T}, \  E_{33} = R \otimes \mathbf{I}_{T}, \nonumber 
    \\&\bar{Q}=\{(\sum_{j \in \mathcal{S}^{k}} \gamma_j)C^{\top}C+ Q\},\nonumber
    \\&E_{12} = E_{21}^{\top} = 
    \begin{bmatrix}
    -\mathbf{I}_n & A^{\top} & \mathbf{0}_{} & \cdots & \mathbf{0}_{}
    \\\mathbf{0}_{} & -\mathbf{I}_n & A^{\top} & \ddots & \vdots
    \\\vdots & \ddots & \ddots & \ddots & \mathbf{0}_{}
    \\\vdots  &   & \ddots & -\mathbf{I}_n & A^{\top} 
    \\\mathbf{0}_{} & \cdots  & \cdots  & \mathbf{0}_{}& -\mathbf{I}_n
    \end{bmatrix}. \label{eqn: E and F}
    \end{align}}
    
    Moreover, $F_1$ and $F_2$ in \eqref{eqn: optimal_u} are defined by
    
    \begin{align*}
    & F_1 = 
    \begin{bmatrix}
    (\sum_{j \in \mathcal{S}^{k}} \gamma_j)C^{\top}\overline{q}^{k}\\
    \vdots\\
    (\sum_{j \in \mathcal{S}^{k}} \gamma_j)C^{\top}\overline{q}^{k}
    \end{bmatrix}, \, \nonumber
     F_2 = 
    \begin{bmatrix}
    -A\mathsf{x}^k
    \\\mathbf{0}_{}
    \\\vdots
    \\\mathbf{0}_{}
    \end{bmatrix},
    \end{align*}
    where $\overline{q}^{k}$ is defined as $\overline{q}^{k}=(\sum_{j \in \mathcal{S}^{k}} \gamma_j q_{j})/(\sum_{j \in \mathcal{S}^{k}} \gamma_j)$.
\end{theorem}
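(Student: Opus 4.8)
The plan is to treat the inner optimal-transport minimization as already resolved, so that the transport weights $\gamma_j$ for $j\in\mathcal{S}^k$ enter \eqref{eqn: minimization problem} as fixed parameters. Under this reduction each local Wasserstein term becomes an explicit quadratic in the state: expanding $\sum_{j}\gamma_j\|C\mathsf{x}^i-q_j\|^2 = (\sum_j\gamma_j)(\mathsf{x}^i)^\top C^\top C\,\mathsf{x}^i - 2(\sum_j\gamma_j)(\mathsf{x}^i)^\top C^\top\overline{q}^k + \text{const}$, where the constant is independent of the decision variables and hence irrelevant to the minimizer. Absorbing the $C^\top C$ contribution into the state-penalty matrix yields $\bar{Q} = (\sum_j\gamma_j)C^\top C + Q$, and the remaining linear-in-state term produces the data vector $F_1$. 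This recasts \eqref{eqn: minimization problem} as a standard finite-horizon equality-constrained quadratic program in the states $\mathsf{x}^{k+1},\dots,\mathsf{x}^{k+T}$ and inputs $u^k,\dots,u^{k+T-1}$.

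Next I would form the Lagrangian $\mathcal{L}$ by adjoining the dynamics $\mathsf{x}^{i+1}=A\mathsf{x}^i+Bu^i$ with multipliers $\lambda^{i+1}$, and write the first-order stationarity (KKT) conditions $\partial\mathcal{L}/\partial\mathsf{x}^i=0$, $\partial\mathcal{L}/\partial u^i=0$, and $\partial\mathcal{L}/\partial\lambda^{i+1}=0$. Each condition is affine in the unknowns, so collecting the states, multipliers, and inputs across the horizon into three stacked vectors recasts the whole system as $E\bar{z}=F$, with $E$ carrying precisely the block structure of \eqref{eqn: E}. The state-stationarity rows contribute $E_{11}=\bar{Q}\otimes\mathbf{I}_T$, the input-stationarity rows contribute $E_{33}=R\otimes\mathbf{I}_T$ and $E_{23}=B\otimes\mathbf{I}_T$, the primal-feasibility (dynamics) rows contribute $E_{12}$ with its bidiagonal $(-\mathbf{I}_n,A^\top)$ pattern, and the known initial state $\mathsf{x}^k$ is moved to the right-hand side to form $F_2$, while the terminal cost $\Phi$ is handled by the same template at the final index. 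I expect this meticulous indexing and Kronecker bookkeeping — reproducing the exact submatrices of \eqref{eqn: E and F}, the zero off-diagonal blocks of \eqref{eqn: E}, and the consistent placement of the terminal term — to be the main obstacle, rather than anything conceptual.

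With the system assembled, I would invoke Proposition \ref{prop: invertibility}: here $E_{11}=\bar{Q}\otimes\mathbf{I}_T\succeq\mathbf{0}$ since $\sum_j\gamma_j\ge 0$, $C^\top C\succeq\mathbf{0}$, and $Q\succeq\mathbf{0}$; $E_{33}=R\otimes\mathbf{I}_T\succ\mathbf{0}$ since $R\succ\mathbf{0}$; and $E_{12}$ is invertible because it is block upper-triangular with $-\mathbf{I}_n$ on its diagonal. The proposition then guarantees that $E$ is invertible and supplies the closed-form blocks of $E^{-1}$ in \eqref{eqn: inverse of E}, so $\bar{z}=E^{-1}F$ is the unique stationary point. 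Reading off the input block via the symmetry $(E^{-1})_{31}=(E^{-1})_{13}^\top$, $(E^{-1})_{32}=(E^{-1})_{23}^\top$ and the zero third component of $F$ gives $\bar{u}^k=(E^{-1})_{13}^\top F_1+(E^{-1})_{23}^\top F_2$, and the selection matrix $[\mathbf{I}_m\ \mathbf{0}\ \cdots\ \mathbf{0}]$ extracts the applied input $u^{*|k}$ as in \eqref{eqn: optimal_u}.

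Finally, to upgrade this stationary point to the claimed minimizer I would observe that the reduced cost is convex: its Hessian is block-diagonal with $\bar{Q}\succeq\mathbf{0}$ on the state blocks and $R\succ\mathbf{0}$ on the input blocks, and the constraints are affine, so the KKT point is the global minimizer and is unique in the inputs. Beyond the block-assembly bookkeeping flagged above, the only other point needing care is justifying that freezing the transport weights $\gamma_j$ obtained from the inner $\min_{\gamma_j\ge 0}$ is consistent with the outer minimization over $u$; I would argue this by the local, per-step decoupling of the transport and control subproblems described in the preceding section.
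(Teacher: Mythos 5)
Your proposal is correct and follows essentially the same route as the paper: form the Lagrangian with costates adjoining the dynamics, expand the local Wasserstein term into the quadratic $\bar{Q}$ and the affine data $F_1$, stack the stationarity conditions into the block system $E\bar{z}=F$, and apply Proposition \ref{prop: invertibility} to invert $E$ and extract $\bar{u}^k$. The convexity observation you add at the end is handled in the paper by a separate remark rather than inside the proof, but the substance is identical.
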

\begin{proof}
    To derive the minimum points of the cost function $J$ in \eqref{eqn: minimization problem}, subject to the agent's dynamics in \eqref{eqn: LTI system}, the method of Lagrange multipliers is employed in this proof.
    The Lagrangian associated with \eqref{eqn: minimization problem} is formulated by
    \begin{equation}
    \begin{aligned}
         \mathcal{L} = &\Phi(\mathsf{x}^{k+T})+\sum_{i=k}^{k+T-1}\left(\dfrac{1}{2}(\mathcal{W}^{i|k})^2 + \dfrac{1}{2}(u^{i})^{\top}Ru^{i}+\right.\\
         &\left.\dfrac{1}{2}(\mathsf{x}^{i})^{\top}Q\mathsf{x}^{i}+(\lambda^{i+1})^{\top}(A\mathsf{x}^{i}+Bu^{i}-\mathsf{x}^{i+1})\right),
     \end{aligned}\label{eqn: Lagrangian}
    \end{equation}
    where $\lambda^{i+1}$ is the costate vector for the constraint function $A\mathsf{x}^{i}+Bu^{i}-\mathsf{x}^{i+1}$. The cost function $J$ should have stationary points at $\x^{*|i},\ \lambda^{*|i},\ u^{*|i}$ where the Lagrangian $\LL$ is stationary with respect to $\x^i$, $u^i$, and $\lambda^i$. This implies   $\partial \LL / \partial \x^{i+1}(\x^{*|k+1},\ldots,\allowbreak\x^{*|k+T},\allowbreak\lambda^{*|k+1},\ldots,\allowbreak \lambda^{*|k+T},u^{*|k},\ldots,\allowbreak u^{*|k+T-1}) = 0$, $\partial \LL / \partial \lambda^{i+1}(\x^{*|k+1},\ldots,\allowbreak\x^{*|k+T},\allowbreak\lambda^{*|k+1},\ldots,\allowbreak \lambda^{*|k+T},u^{*|k},\ldots,\allowbreak u^{*|k+T-1}) = 0$, and $\partial \LL / \partial u^{i}(\x^{*|k+1},\ldots,\allowbreak\x^{*|k+T},\allowbreak\lambda^{*|k+1},\ldots,\allowbreak\lambda^{*|k+T},u^{*|k},\ldots,\allowbreak u^{*|k+T-1}) = 0$, for $i=k,k+1,\cdots,k+T-1$.
    
    Thus, the necessary conditions for optimality and boundary conditions are given by
    {\allowdisplaybreaks
    \begin{subequations}\label{eqn: pontryagin}
    \begin{align}
    	&\dfrac{\partial \mathcal{L}}{\partial \mathsf{x}^{i}} = \mathbf{0} = 
        \dfrac{1}{2}\dfrac{\partial\mathcal(\mathcal{W}^{i|k})^2}{\partial \mathsf{x}^{i}} + Q\mathsf{x}^{*|i} + A^{\top}\lambda^{*|i+1}-\lambda^{*|i} \nonumber
        \\& \quad=\sum_{j \in \mathcal{S}^{k}} \gamma_j C^{\top}(C\mathsf{x}^{*|i}-q_j)+ Q\mathsf{x}^{*|i} + A^{\top}\lambda^{*|i+1}-\lambda^{*|i}\nonumber
        \\& \quad= \bar{Q}\mathsf{x}^{*|i}-(\sum_{j \in \mathcal{S}^{k}} \gamma_j)C^{\top}\overline{q}^{k} + A^{\top}\lambda^{*|i+1}-\lambda^{*|i}, \label{eqn: pontryagin-1}
        \\&\dfrac{\partial \mathcal{L}}{\partial \lambda^{i'+1}} = \mathbf{0} = A\mathsf{x}^{*|i'} + Bu^{*|i'} - \mathsf{x}^{*|i'+1},\label{eqn: pontryagin-2}
        \\&\dfrac{\partial \mathcal{L}}{\partial u^{i'}} =\mathbf{0}= Ru^{*|i'} + B^{\top}\lambda^{*|i'+1}, \label{eqn: pontryagin-3}
        \\
        &\qquad \left(
        \begin{aligned}
        &\text{for } i = k+1, k+2, \cdots ,k+T-1\nonumber
        \\&\text{and } i'=k,k+1,\cdots,k+T-1 \nonumber
        \end{aligned} \right)
        \\&\dfrac{\partial \mathcal{L}}{\partial \mathsf{x}^{k+T}} = \mathbf{0} = 
         \dfrac{1}{2}\dfrac{\partial\Phi(\mathsf{x}^{k+T})}{\partial \mathsf{x}^{k+T}} - \lambda^{*|k+T} \nonumber
         \\&= \bar{Q}\mathsf{x}^{*|k+T}-(\sum\nolimits_{j \in \mathcal{S}^{k}} \gamma_j)C^{\top}\overline{q}^{k} - \lambda^{*|k+T}.\label{eqn: pontryagin-4}
    \end{align}
    \end{subequations}}
    The equations \eqref{eqn: pontryagin-1} are for the costate equations, \eqref{eqn: pontryagin-2} are agent's dynamics, \eqref{eqn: pontryagin-3} are for the optimal inputs, and \eqref{eqn: pontryagin-4} are 
    boundary conditions at the final time $k+T$.
    These equations can be rearranged by 
    {\allowdisplaybreaks
    \begin{equation}
    \begin{aligned}\label{eqn: necessary conditions}
        \bar{Q}\mathsf{x}^{*|i} -\lambda^{*|i} + A^{\top}\lambda^{*|i+1} &= (\sum\nolimits_{j \in \mathcal{S}^{k}} \gamma_j)C^{\top}\overline{q}^{k},
        \\\bar{Q}\mathsf{x}^{*|k+T} -\lambda^{*|k+T} &= (\sum\nolimits_{j \in \mathcal{S}^{k}} \gamma_j)C^{\top}\overline{q}^{k},
        \\-\mathsf{x}^{*|k+1}+Bu^{*|k} &=-A\mathsf{x}^{k},
        \\A\mathsf{x}^{*|i}-\mathsf{x}^{*|i+1}+Bu^{*|i} &=\mathbf{0},
        \\Ru^{*|i} + B^{\top}\lambda^{*|i+1} &= \mathbf{0},
        \\(\text{for }i = &k+1, k+2, \cdots ,k+T-1.) 
    \end{aligned}
    \end{equation}
    
    Defining the augmented vectors $\bar{\mathsf{x}}^{k+1} =\allowbreak [(\mathsf{x}^{*|k+1})^{\top}, \,\allowbreak \ldots\, ,\allowbreak(\mathsf{x}^{*|k+T})^{\top}]^{\top}$, $\bar{\lambda}^{k+1} = \allowbreak[(\lambda^{*|k+1})^{\top}, \, \allowbreak\ldots, \,\allowbreak(\lambda^{*|k+T})^{\top}]^{\top}$, 
    and
    $\bar{u}^{k} = \allowbreak[(u^{*|k})^{\top}, \,\allowbreak\ldots, \,\allowbreak(u^{*|k+T-1})^{\top}]^{\top}$},
    the equations \eqref{eqn: necessary conditions} are rewritten in the matrix form as
    \begin{align}
        E\begin{bmatrix}
        \Bar{\mathsf{x}}^{k+1} \\
        \Bar{\lambda}^{k+1} \\
        \Bar{u}^{k}
        \end{bmatrix} =
        \begin{bmatrix}
        E_{11} & E_{12} & \mathbf{0}_{} \\
        E_{21} & \mathbf{0} & E_{23}\\
        \mathbf{0} & E_{32} & E_{33}
        \end{bmatrix} 
        \begin{bmatrix}
        \Bar{\mathsf{x}}^{k+1} \\
        \Bar{\lambda}^{k+1} \\
        \Bar{u}^k
        \end{bmatrix} =
        \begin{bmatrix}
        F_{1} \\
        F_{2} \\
        \mathbf{0}
        \end{bmatrix}, \label{eqn: matrix form}
    \end{align}
    where the submatrices of the block matrix $E$, namely $E_{11},\ E_{12},\ E_{21},\ E_{23},\ E_{32},\ E_{33}$, are defined in \eqref{eqn: E and F}.
    The matrix $E_{12}$ already has full rank in the row echelon form and is hence invertible. Further, $E_{11}$ is symmetric positive semidefinite and $E_{33}$ is positive definite due to the property of $Q$ and $R$, leading to the matrix $E \in \mathbb{R}^{(2nT+mT)\times(2nT+mT)}$ being invertible, based on Proposition \ref{prop: invertibility}. As a result,
    \begin{align}
    \begin{bmatrix}
    \Bar{\mathsf{x}}^{k+1} \\
    \Bar{\lambda}^{k+1} \\
    \Bar{u}^k
    \end{bmatrix} =
    \begin{bmatrix}
    E_{11} & E_{12} & \mathbf{0}_{} \\
    E_{21} & \mathbf{0} & E_{23}\\
    \mathbf{0} & E_{32} & E_{33}
    \end{bmatrix}^{-1} 
    \begin{bmatrix}
    F_{1} \\
    F_{2} \\
    \mathbf{0}
    \end{bmatrix}. \label{eqn: augmented variables}
    \end{align}
    
    By substituting $E^{-1}$ with the equation \eqref{eqn: inverse of E}, the analytic form of the augmented control input vector $\bar{u}^k$ is calculated by
    \begin{align}
        \bar{u}^k = (E^{-1})_{13}^{\top}F_{1} + (E^{-1})_{23}^{\top}F_{2}.\nonumber
    \end{align}
    
    Finally, the optimal control input for D$^2$OC at time $k$ is obtained as $u^{*|k} = \begin{bmatrix}
    \mathbf{I}_m & \mathbf{0} & \cdots & \mathbf{0}
    \end{bmatrix}
    \bar{u}^k$.
\end{proof}
\begin{remark}{\bf (Optimality Guarantee)}
Generally, the partial derivatives \eqref{eqn: pontryagin} are known to be the necessary condition for the optimality (minimum in this case), however, it is guaranteed that the control input \eqref{eqn: optimal_u} leads to the global minimum since the system is linear and the Hessian matrix $\partial^2 \mathcal{L}/(\partial \mathbf{u}^k)^2$, where $\mathbf{u}^k = \allowbreak[(u^{k})^{\top}, \,\allowbreak\ldots, \,\allowbreak(u^{k+T-1})^{\top}]^{\top}$, is calculated as $E_{33}$. Thus, by choosing the matrix $R$ as positive definite, the Lagrangian $\mathcal{L}$ becomes convex, reaching its global minimum using the proposed optimal control input $u^{*|k}$. 
\end{remark}

\begin{remark}{\bf (Existence of Optimal Control Input)}
The optimal control input \eqref{eqn: optimal_u}  \textbf{always} exists regardless of the real-valued system matrices $A$, $B$, and $C$ in \eqref{eqn: LTI system}. This is because the invertibility of $E$ depends only on $E_{11}$ and $E_{33}$, which are independent of the system matrices, whereas $E_{12}$ is known to be invertible regardless of $A$ due to its structure. This guarantees that the optimal control input \eqref{eqn: optimal_u} can be found for any LTI system with real matrices.
\end{remark}

\subsubsection{Nonlinear system case}
Consider a discrete-time control-affine nonlinear system as follows:
\begin{align}
    \mathsf{x}^{k+1} = \mathsf{f}(\mathsf{x}^k)+\mathsf{g}(\mathsf{x}^k)u^k,\label{eqn: affine nonlinear system}
\end{align}
where $\mathsf{f}:\mathbb{R}^{n} \rightarrow \mathbb{R}^{n}$ and $\mathsf{g}:\mathbb{R}^{n} \rightarrow \mathbb{R}^{n\times m}$ are nonlinear transformations. The Lagrangian associated with the local Wasserstein distance and the nonlinear dynamics is formulated by
\begin{align}
    \begin{aligned}
         \mathcal{L} = &\Phi(\mathsf{x}^{k+T})+\sum_{i=k}^{k+T-1}\left(\dfrac{1}{2}(\mathcal{W}^{i|k})^2 + \dfrac{1}{2}(u^{i})^{\top}Ru^{i}+\right.\\
         &\left.\dfrac{1}{2}(\mathsf{x}^{i})^{\top}Q\mathsf{x}^{i}+(\lambda^{i+1})^{\top}(\mathsf{f}(\mathsf{x}^{i}) + \mathsf{g}(\mathsf{x}^{i})u^{i}-\mathsf{x}^{i+1})\right),
         \end{aligned}\label{eqn: Lagrangian_NonLin}
    \end{align}
where $Q\succeq \mathbf{0}$ and $R\succ \mathbf{0}$. The necessary conditions and boundary conditions to minimize \eqref{eqn: Lagrangian_NonLin} are obtained by
{\allowdisplaybreaks
\begin{subequations}
\begin{align}
    &\bar{Q}\mathsf{x}^{*|i}-\lambda^{*|i}+ \left[\dfrac{\partial\mathsf{f}^{i}}{\partial\mathsf{x}^{i}}\right]^{\top}\lambda^{*|i+1} = (\sum\nolimits_{j \in \mathcal{S}^{k}} \gamma_j)C^{\top}\overline{q}^{k}, \label{eqn: NLCAS_pontryagin-a}
    \\&\mathsf{f}^{*|i'} = \mathsf{x}^{*|i'+1}, \label{eqn: NLCAS_pontryagin-b}
    \\&Ru^{*|i'} + \mathsf{g}\label{eqn: NLCAS_pontryagin-c}(\mathsf{x}^{*|i'})^{\top}\lambda^{*|i'+1}=\mathbf{0},
    \\&\qquad \left(
        \begin{aligned}
        &\text{for } i =  k+1,  k+2, \cdots ,k+T-1
        \\&\text{and } i'=k,k+1,\cdots,k+T-1 
        \end{aligned} \right)\nonumber
    \\&\bar{Q}\mathsf{x}^{*|k+T}- \lambda^{*|k+T} = (\sum\nolimits_{j \in \mathcal{S}^{k}} \gamma_j)C^{\top}\overline{q}^{k},\label{eqn: NLCAS_pontryagin-d}
    \\&\frac{\partial^2 \mathcal{L}}{(\partial \mathbf{u}^k)^2}(\bar{\x}^{k+1},\bar{\lambda}^{k+1},\bar{u}^{k})\succ \mathbf{0},\label{eqn: NLCAS_pontryagin-e}
    \end{align}
\end{subequations}}where $\mathsf{f}^{i}=\mathsf{f}(\mathsf{x}^{i}) + \mathsf{g}(\mathsf{x}^{i})u^{i}$, $\bar{Q}=\{(\sum_{j \in \mathcal{S}^{k}} \gamma_j)C^{\top}C+ Q\}$, $\overline{q}^{k}=(\sum_{j \in \mathcal{S}^{k}} \gamma_j q_{j})/(\sum_{j \in \mathcal{S}^{k}} \gamma_j)$, and $\mathbf{u}^k = \allowbreak[(u^{k})^{\top},\,\allowbreak\ldots, \,\allowbreak(u^{k+T-1})^{\top}]^{\top}$. The equation \eqref{eqn: NLCAS_pontryagin-e} ensures that the control input $u$ corresponds to a local minimizer (given $\lambda$), while the costate $\lambda$ satisfies the backward adjoint equations \eqref{eqn: NLCAS_pontryagin-a}, which correspond to maximizing the Lagrangian with respect to $\lambda$. Since the left-hand side of \eqref{eqn: NLCAS_pontryagin-e} simplifies to $R \otimes \mathbf{I}_{T}$, the condition is satisfied by choosing $R \succ \mathbf{0}$.

Unlike the LTI system case, the solution for the optimal control input cannot be obtained analytically due to the nonlinearity. Thus, a numerical solver is necessary to obtain the optimal control input for the general horizon length $T$ greater than one. For the horizon length $T=1$, however, we have at least the following result.

\begin{proposition}\label{prop: nonlinear optimal control}
Consider the control-affine nonlinear system represented by \eqref{eqn: affine nonlinear system}. The optimal control input that forms a primal-dual solution to the Lagrangian in \eqref{eqn: Lagrangian_NonLin} over a single horizon is derived by
\begin{align}
    &u^{*|k} = K\{(\sum\nolimits_{j \in \mathcal{S}^{k}} \gamma_j)C^{\top}\overline{q}^{k}-\bar{Q}\mathsf{f}(\mathsf{x}^{k}) \},\label{eqn: Optimal input of affine nonlinear system}
    \\&K = \left \{R+\mathsf{g}(\mathsf{x}^k)^{\top}\bar{Q}\mathsf{g}(\mathsf{x}^k)\right\}^{-1}\mathsf{g}(\mathsf{x}^k)^\top.\nonumber
\end{align}
\end{proposition}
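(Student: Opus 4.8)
The plan is to specialize the first-order stationarity system \eqref{eqn: NLCAS_pontryagin-a}--\eqref{eqn: NLCAS_pontryagin-d} to the single-step horizon $T=1$ and then eliminate the costate and the predicted state to obtain a closed-form expression for $u^{*|k}$. When $T=1$, the only decision variables are the input $u^{*|k}$, the one-step-ahead state $\mathsf{x}^{*|k+1}$, and the single costate $\lambda^{*|k+1}$, while the current state $\mathsf{x}^{*|k}=\mathsf{x}^k$ is given. The index range $i=k+1,\dots,k+T-1$ in the costate recursion \eqref{eqn: NLCAS_pontryagin-a} is then empty, so that condition drops out, and the surviving equations are the dynamics \eqref{eqn: NLCAS_pontryagin-b}, the input stationarity \eqref{eqn: NLCAS_pontryagin-c}, and the terminal condition \eqref{eqn: NLCAS_pontryagin-d}, each evaluated at $i'=k$.

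First I would use the terminal condition \eqref{eqn: NLCAS_pontryagin-d} to solve for the costate explicitly as $\lambda^{*|k+1}=\bar{Q}\mathsf{x}^{*|k+1}-(\sum_{j\in\mathcal{S}^k}\gamma_j)C^\top\overline{q}^k$. Substituting this into the input stationarity condition \eqref{eqn: NLCAS_pontryagin-c} at $i'=k$ yields a relation between $u^{*|k}$ and $\mathsf{x}^{*|k+1}$, and I would then eliminate $\mathsf{x}^{*|k+1}$ through the dynamics \eqref{eqn: NLCAS_pontryagin-b}, namely $\mathsf{x}^{*|k+1}=\mathsf{f}(\mathsf{x}^k)+\mathsf{g}(\mathsf{x}^k)u^{*|k}$. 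Collecting the terms that multiply $u^{*|k}$ produces the single linear equation
\begin{align*}
\{R+\mathsf{g}(\mathsf{x}^k)^\top\bar{Q}\mathsf{g}(\mathsf{x}^k)\}u^{*|k}=\mathsf{g}(\mathsf{x}^k)^\top\{(\sum\nolimits_{j\in\mathcal{S}^k}\gamma_j)C^\top\overline{q}^k-\bar{Q}\mathsf{f}(\mathsf{x}^k)\},
\end{align*}
whose solution is precisely \eqref{eqn: Optimal input of affine nonlinear system} with the stated gain $K$.

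The one point requiring care --- the main, and rather mild, obstacle --- is justifying that the coefficient matrix $R+\mathsf{g}(\mathsf{x}^k)^\top\bar{Q}\mathsf{g}(\mathsf{x}^k)$ is invertible, since otherwise $K$ is undefined. Here I would observe that $\bar{Q}=(\sum_{j\in\mathcal{S}^k}\gamma_j)C^\top C+Q\succeq\mathbf{0}$, because $\gamma_j\ge 0$ and $Q\succeq\mathbf{0}$, so that $\mathsf{g}(\mathsf{x}^k)^\top\bar{Q}\mathsf{g}(\mathsf{x}^k)\succeq\mathbf{0}$; adding $R\succ\mathbf{0}$ produces a positive definite, and hence invertible, matrix at every state $\mathsf{x}^k$, independently of the particular $\mathsf{f}$ and $\mathsf{g}$. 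Finally I would stress that, unlike the LTI case, the nonlinearity of $\mathsf{f}$ and $\mathsf{g}$ means this derivation only certifies a primal-dual (stationary) point: together with the second-order condition \eqref{eqn: NLCAS_pontryagin-e}, which reduces to $R\succ\mathbf{0}$ when $T=1$, it guarantees that $u^{*|k}$ is a local minimizer for the associated costate, but not necessarily a global optimum --- consistent with the proposition's phrasing ``forms a primal-dual solution.''
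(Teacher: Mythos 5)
Your proposal is correct and follows essentially the same route as the paper: write down the single-horizon stationarity conditions (dynamics, input stationarity, terminal costate condition), eliminate $\lambda^{*|k+1}$ and $\mathsf{x}^{*|k+1}$ by substitution, and note that the second-order condition reduces to $R\succ\mathbf{0}$. Your explicit justification that $R+\mathsf{g}(\mathsf{x}^k)^{\top}\bar{Q}\mathsf{g}(\mathsf{x}^k)$ is positive definite is a welcome detail that the paper only supplies in a subsequent remark.
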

\begin{proof}
   For the control-affine nonlinear system, the necessary conditions for optimality over a single horizon can be derived from the saddle point condition of the Lagrangian. That is, the control input \( u^{*|k} \) minimizes the Lagrangian, while the costate \( \lambda^{*|k+1} \) maximizes it. The corresponding stationarity conditions are given by:
\begin{subequations}
    \begin{align}
&\mathsf{x}^{*|k+1} = \mathsf{f}(\mathsf{x}^k)+\mathsf{g}(\mathsf{x}^k)u^{*|k},\label{eqn: Affine_SingleHor_a}
\\&Ru^{*|k} + \mathsf{g}(\mathsf{x}^k)^{\top}\lambda^{*|k+1}=\mathbf{0},\label{eqn: Affine_SingleHor_b}
\\&\bar{Q}\mathsf{x}^{*|k+1}- \lambda^{*|k+1} = (\sum\nolimits_{j \in \mathcal{S}^{k}} \gamma_j)C^{\top}\overline{q}^{k}.\label{eqn: Affine_SingleHor_c}
\end{align}
\end{subequations}
These conditions characterize the saddle point of the Lagrangian \( \mathcal{L}(\x^{*|k+1}, \lambda^{*|k+1}, u^{*|k}) \), where \( u^{*|k} \) solves \( \min_u \mathcal{L} \), and \( \lambda^{*|k+1} \) solves \( \max_\lambda \mathcal{L} \). 

To guarantee that the stationary point corresponds to a local minimum with respect to the control input, the second variation of the Lagrangian with respect to \( u^k \) must be positive definite, i.e.,
\begin{align}
\frac{\partial^2 \mathcal{L}}{(\partial u^k)^2}(\x^{*|k+1},\lambda^{*|k+1},u^{*|k})=R\succ \mathbf{0}.\label{eqn: NLCAS minimum guarantee}
\end{align}
Since \( R \) is assumed to be a positive definite matrix in \eqref{eqn: Lagrangian_NonLin}, the condition \eqref{eqn: NLCAS minimum guarantee} ensures that \( u^{*|k} \) is a local minimizer of the Lagrangian in the \( u^k \)-direction.

Then, by solving equations \eqref{eqn: Affine_SingleHor_a}-\eqref{eqn: Affine_SingleHor_c} through substitution, the optimal control input for D$^2$OC of the nonlinear system is obtained by \eqref{eqn: Optimal input of affine nonlinear system}.
\end{proof}

\begin{remark} \textbf{(Local vs. Global Optimality in Nonlinear Control)}
In Proposition \ref{prop: nonlinear optimal control}, the optimal control law was derived using the method of Lagrange multipliers, considering the control-affine nonlinear dynamics of the system given by \eqref{eqn: affine nonlinear system}.
The resulting control law satisfies the necessary conditions for optimality, ensuring \textit{local optimality} of the solution. It is important to note that, due to the nonlinearity of the system, the control law may not always guarantee \textit{global optimality}. Multiple local optima can exist for such nonlinear systems, which limits the scope of the derived controller's optimality. The proposed controller ensures that it satisfies the necessary conditions for a local optimal solution, but does not rule out the existence of other potential local optima.
\end{remark}

\begin{remark} \textbf{(Semi-Analytic Solution for Nonlinear Control)}
In Proposition 2, while the computation of the optimal control input involves the nonlinear terms $\mathsf{f}(\x^k)$ and $\mathsf{g}(\x^k)$, it is important to note that Equation \eqref{eqn: Optimal input of affine nonlinear system} provides a semi-analytic optimal solution. The only numerical computation required is the inversion of a matrix. Since the matrix $R$ is positive-definite, the term inside the inverse is well-posed and invertible, ensuring that the optimal gain $K$ can be computed efficiently for any given nonlinear term $\mathsf{g}(\x^k)$. Thus, the implementation of the control input is practical and straightforward even in the presence of nonlinearities.
\end{remark}

\subsection{Weight Update Stage}
The optimal control input is applied to agent $r$ at time $k$, and the agent then moves to the new coordinate ${}^{r}y^{k+1}$ at time $k+1$. In this stage, the weight of the sample-points should be updated accordingly by transporting their weights to the newly created agent-point at ${}^{r}y^{k+1}$. Through this stage, the sample-points will lose their weight, diminishing their priority at a later time. Consequently, the agent will more likely visit other sample-points that are unexplored according to the proposed local sample-point selection formula \eqref{eqn: weight-normalized distance} as they have higher weights. 

This weight update stage determines how much weight should be transported from each sample-points to the agent-point. 
The solution to this problem can be obtained by solving the following optimal transport problem:
\begin{align}\label{eqn: Weight update problem}
&{}^{r}\gamma_{j}^{*|k+1} = \argmin_{{}^{r}\gamma_{j}^{k+1}} \sum\nolimits_{j} {}^{r}\gamma_{j}^{k+1}\lVert {}^{r}y^{k+1} - q_j \rVert^2 \\
		&\text{subject to} \quad
  \begin{aligned}
  & \sum\nolimits_{j} {{}^{r}\gamma}^{k+1}_{j} = {}^r\alpha^{k+1},\  0 \leq {{}^{r}\gamma}^{k+1}_{j} \leq  {{}^{r}\beta}^k_{j},\quad \forall j,
    \end{aligned}\nonumber
	\end{align}
where ${}^{r}\gamma_{j}^{*|k+1}$ is the optimal transportation from the sample-point to the agent-point ${}^{r}y^{k+1}$. The first constraint implies that the total transported weight must be the same as the demand capacity on the agent-point, which is calculated by the consumed energy between the time interval $[k, k+1]$ divided by the initially available energy.
The second one means the transportation must be nonnegative as well as upper-bounded by the supply capacity ${}^{r}\beta^k_{j}$ from the corresponding sample-point.

This optimal transportation problem \eqref{eqn: Weight update problem} is a Linear Programming (LP) problem for which the solution can be obtained by using an LP solver. Then, the remaining weight of the sample-points is updated as
\begin{align}
	{}^{r}\beta^{k+1}_{j} &= {}^{r}\beta^{k}_{j}- {}^{r}{\gamma}_{j}^{*|k+1},\quad \forall j.\label{eqn: weight update}
\end{align}

\subsection{Weight-Sharing Stage for Decentralized Control}
Stages A and B are applied independently to each agent in the multi-agent system. However, collaborations between the agents have not been considered thus far. To accomplish multi-agent collaborative coverage in a decentralized setup, Stage C performs the information exchange necessary to synchronize the agents.

Suppose that any two agents are within the communication range threshold $r_\text{comm}$ and hence they can share the weight information through communications. In that case, the agents (e.g., agents $r$ and $s$) can exchange the coverage progress so that each of them can know which areas in the domain were already covered by other agents.

For example, in the previous work \cite{lee2022density}, the following weight-sharing method was proposed.

\noindent\textit{Original weight-sharing method}:
\begin{align}\label{eqn: D2C weight update}
    {{}^r\beta}^{k}_{j} = {^s\beta}^{k}_{j} = \min({^r\beta}^{k}_{j},\,{^s\beta}^{k}_{j})\,\, \forall j, \,\, \text{if } \|{}^ry^k - {}^sy^k\| \leq r_{\text{comm}},
\end{align}
which enables the agents to update the weight at each sample-point by adopting the minimum value between agents $r$ and $s$, provided that the distance between them is within the communication range $r_{\text{comm}}$ at time $k$.

It turns out that this weight-sharing method may lead to some work redundancy, causing lower efficiency in area coverage with decentralized control. To prevent this work redundancy, a new weight-sharing method is proposed as follows.

\noindent\textit{Proposed weight-sharing method}:
\begin{align}
	&{}^{r|l}{\Gamma}_{j}^{n+1}={}^{s|l}{\Gamma}_{j}^{m+1}=\max({}^{r|l}{\Gamma}_{j}^{n},{}^{s|l}{\Gamma}_{j}^{m}),\,\, \forall j,l=1,...,L ,\nonumber
\\&{}^r\beta^{k}_j={\beta}_{j}^{0}  - \sum\nolimits_{l=1}^{L} {}^{r|l}{\Gamma}_{j}^{n+1}, \,\, \text{if } \|{}^ry^k - {}^sy^k\| \leq r_{\text{comm}},\label{eqn: D2OC weight update}
\end{align}
where $l$ denotes the agent index, and $\beta_j^{0}$ is the initial weight of the sample-point $q_j$.
Although the proposed D$^2$OC method can handle non-uniform initial weights for the sample-points, a uniform initial weight is considered in this paper for simplicity. Instead, the priority of the domain in a given mission is reflected by sampling more points in high-priority areas.

In \eqref{eqn: D2OC weight update}, ${}^{r|l}\Gamma^n_j$ represents the \textit{coverage progress} defined by
\begin{align}
\ {}^{r|r}\Gamma^n_j=\sum_{i=0}^{k} {}^{r}{\gamma}_{j}^{*|i},\quad {}^{r|l}\Gamma^n_j=\sum_{i=0}^{k^{l\rightarrow r}} {}^{l}{\gamma}_{j}^{*|i}, \,\, \forall l \neq r, \label{eqn: coverage progress}
\end{align}
where the superscript $n\in \mathbb{N}_0$ indicates agent $r$ has completed the weight-sharing $n$ times, and the variable $k^{l\rightarrow r}$ represents the time up to which agent $r$ is aware of the coverage progress achieved by agent $l$.
In short, ${}^{r|l}{\Gamma}_{j}^{n}$ represents the coverage progress on the sample-point $q_j$ done by agent $l$, which is transferred to agent $r$, not only directly by agent $l$ but also indirectly by other agents.

Fig. \ref{fig: schematic_proposed_sharing} exemplifies the information exchange process for the proposed weight-sharing method. Consider that three agents are employed, and no weight sharing has occurred yet. At time $k_1$, the first weight sharing occurred between agents 2 and 3 as they were within the communication range threshold, and they exchanged their own coverage progress. At time $k_2$, the second weight sharing occurred between agents 1 and 3. At this point, they exchange not only their own coverage progress but also coverage progress achieved by agent $2$.
Notice that at time $k_2$ the coverage progress ${}^{3|3}{\Gamma}_{j}^{1}$ is given by 0.2 while it was ${}^{3|3}{\Gamma}_{j}^{0}=0.1$ at time $k_1$, which is not because of the communication but due to its own work.

\begin{figure}[h]
    \centering
    \includegraphics[width=0.95\linewidth]{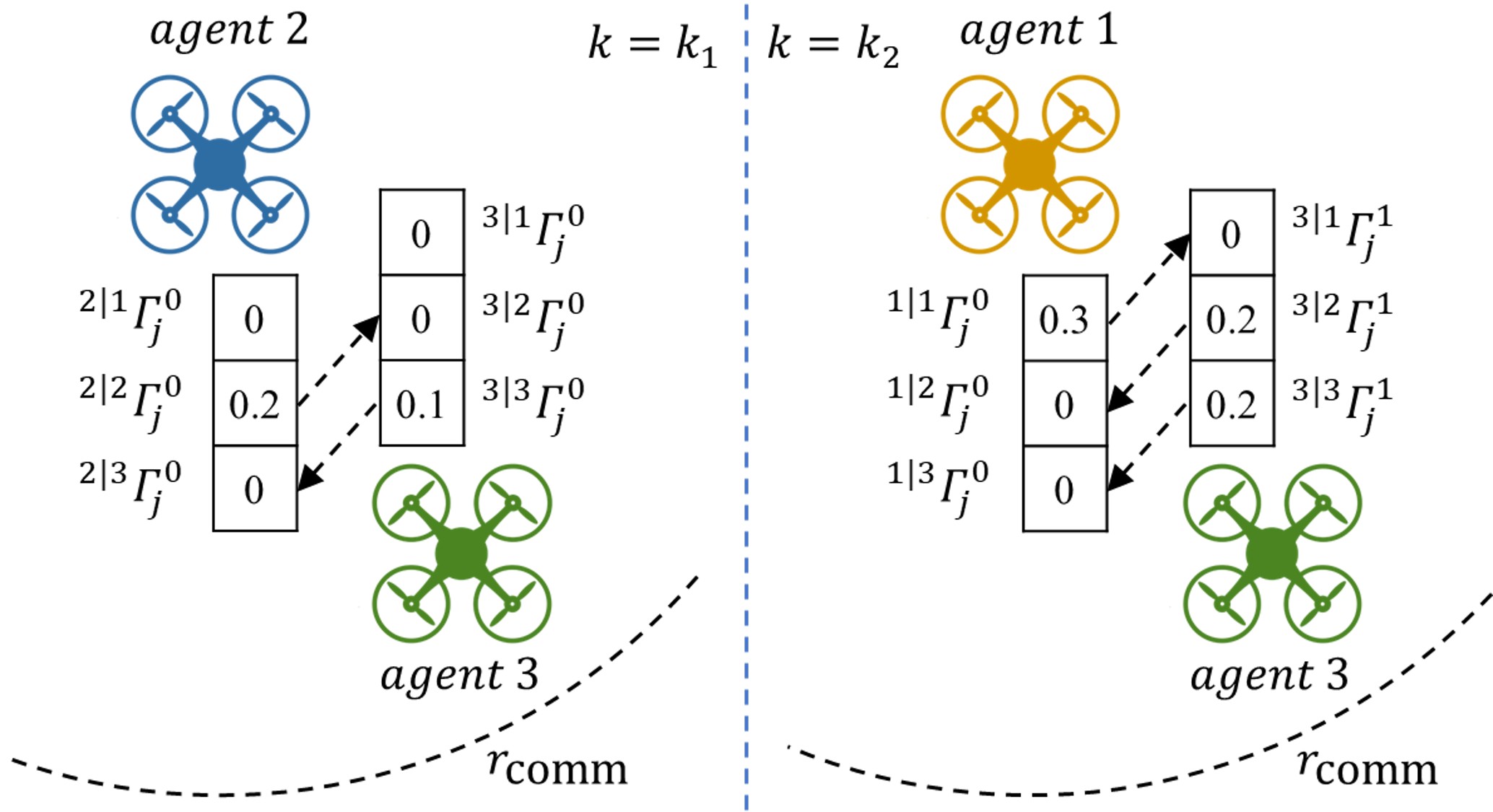}
    \caption{Schematic illustrating the exchange of the coverage progress in the proposed weight-sharing method.
    }
    \label{fig: schematic_proposed_sharing}
\end{figure}

The coverage progress for all agents is updated based on the most recent information, following the first equation in \eqref{eqn: D2OC weight update}.
The second equation denotes that the remaining weight of the sample-point is adjusted using the updated coverage progress during Stage C.

\begin{figure}[h]
    \centering
    \includegraphics[width=0.95\linewidth]{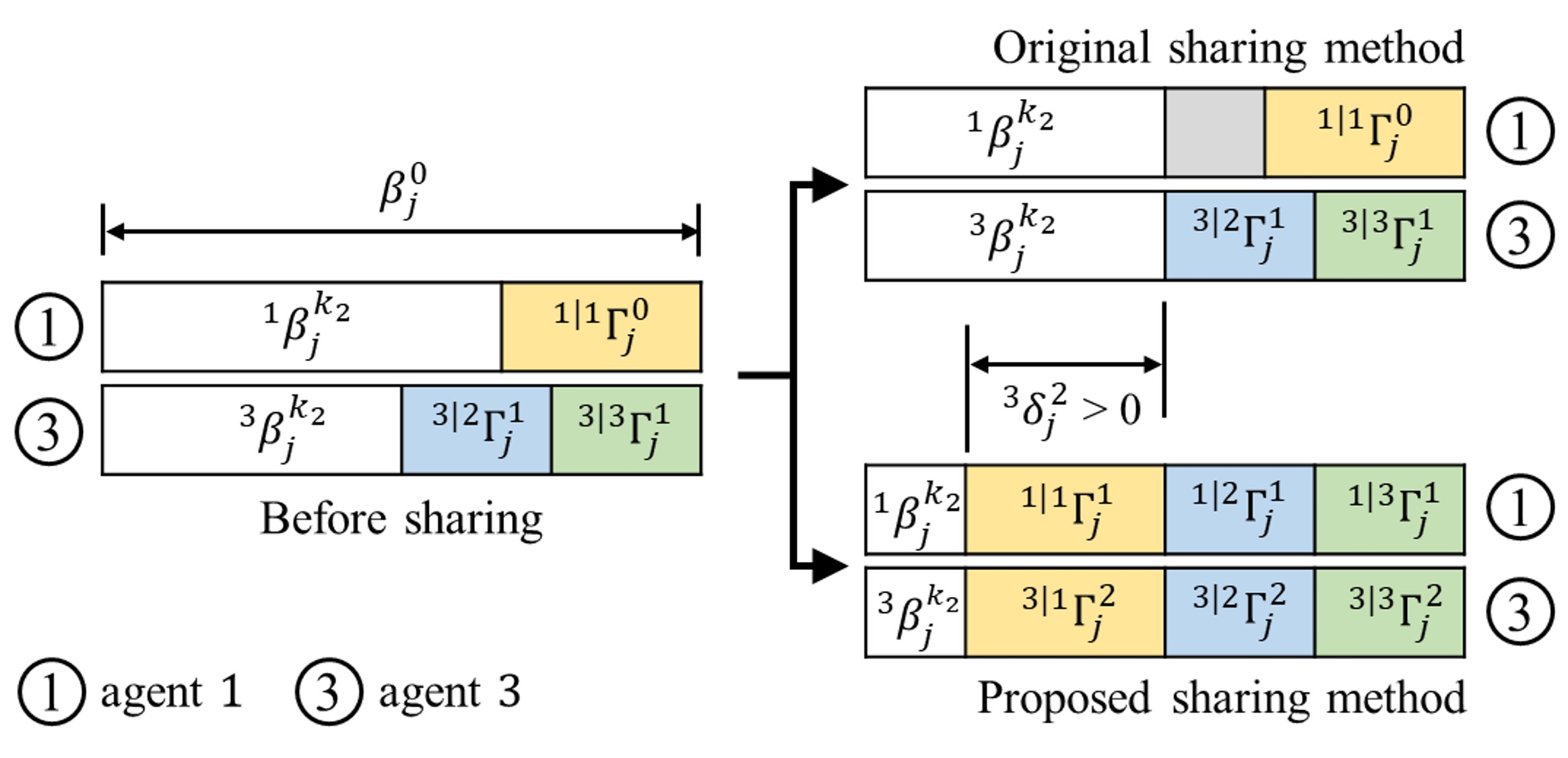}
    \caption{Schematic illustrating both weight-sharing methods for the scenario of $k=k_2$ in Fig. \ref{fig: schematic_proposed_sharing}.} 
    \label{fig: weight sharing comparison}
\end{figure}
To compare the performance between the original and proposed weight-sharing methods, consider the case where the original sharing method was applied. Similar to the expression for the remaining weight of the proposed weight-sharing method in \eqref{eqn: D2OC weight update}, the remaining weight of the original method can be alternatively represented by
   \begin{equation}
    {}^r\beta^{k}_j={\beta}_{j}^{0}  - \sum\nolimits_{l=1}^{L} {}^{r|l}{\Gamma}_{j}^{n} + {}^r\delta^{n}_{j},
    \label{eqn: coverage progress - original}
    \end{equation}
where the symbol ${}^r\delta^{n}_{j}\in\mathbb{R}$ is adopted to represent the difference in the remaining weight on $q_j$ compared to the proposed sharing method using \eqref{eqn: D2OC weight update}. The sign of ${}^r\delta^{n}_{j}$ then determines which weight-sharing method has less (or more) remaining weight on the same sample-point.
For instance, ${}^r\delta^{n}_{j} > 0$ indicates that the original weight-sharing method results in a higher remaining weight and hence, lower efficiency.

Fig. \ref{fig: weight sharing comparison} depicts the remaining weights after applying each method at the moment of $k=k_2$ in Fig. \ref{fig: schematic_proposed_sharing}. In this scenario, the sign of ${}^{3}\delta^2_j$ is greater than 0, showing the proposed weight-sharing method is more efficient than the original one.

To rigorously prove the sign of ${}^r\delta^{n}_{j}$ for a more general case, the following results are provided.

\begin{lemma}\label{lemma: minimum equation}
For any $A,B \in \mathbb{R}$, the minimum of the two, i.e., $\min(A,B)$, is obtained by
\begin{equation}
    \min (A,B) = \left(A+B-|A-B|\right)/2.\nonumber
\end{equation}
\end{lemma}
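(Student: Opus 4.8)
The plan is to prove this elementary identity by a direct case analysis on the sign of $A-B$, which is the natural way to unfold the absolute value. Since the right-hand side $(A+B-|A-B|)/2$ depends on $A$ and $B$ only through their sum $A+B$ and the magnitude $|A-B|$, the only source of ambiguity is the $|A-B|$ term, and resolving its two possible values is enough to complete the argument.

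First I would treat the case $A \le B$. Here $A-B \le 0$, so $|A-B| = B-A$, and substituting gives
\begin{equation}
\frac{A+B-|A-B|}{2} = \frac{A+B-(B-A)}{2} = \frac{2A}{2} = A = \min(A,B).\nonumber
\end{equation}
Then I would treat the complementary case $A > B$, where $A-B > 0$ forces $|A-B| = A-B$, and the same substitution yields
\begin{equation}
\frac{A+B-|A-B|}{2} = \frac{A+B-(A-B)}{2} = \frac{2B}{2} = B = \min(A,B).\nonumber
\end{equation}
Since the two cases are exhaustive and the formula reproduces $\min(A,B)$ in each, the identity holds for all $A,B \in \mathbb{R}$.

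There is essentially no obstacle here: the statement is a routine algebraic fact, and the only thing to be careful about is correctly assigning $|A-B|$ to $B-A$ versus $A-B$ in the two regimes. I would note, as a remark rather than a separate step, that the boundary case $A=B$ is consistent with either branch (both give $A=B$), so the choice of which case absorbs the equality is immaterial. This lemma is invoked later to rewrite the $\min$ appearing in the original weight-sharing rule \eqref{eqn: D2C weight update} in a closed algebraic form, which is presumably why the clean expression $(A+B-|A-B|)/2$ is preferred over the piecewise definition.
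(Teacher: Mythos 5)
Your case analysis is correct and complete; the paper itself states this lemma without any proof, treating it as an elementary identity, so there is nothing to compare against. Your argument (splitting on the sign of $A-B$ to resolve the absolute value, with the boundary case $A=B$ consistent in either branch) is exactly the standard justification one would supply.
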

\begin{proposition}\label{prop: general form minimum share}
Consider the decentralized density-driven optimal control strategy applied to agent $r$ in the multi-agent system with the original weight-sharing method \eqref{eqn: D2C weight update}.
For agent $r$, the current remaining weight on the sample-point $q_j$ is given by \eqref{eqn: coverage progress - original}, with an additive term ${}^{r}\delta_j^n\in\mathbb{R}$.

If agent $r$ can communicate with agent $s$ within the given communication range threshold, $r_\text{comm}$, the remaining weight on the sample-point $q_j$ for agent $r$ is updated by 
    \begin{equation}
    {}^r\beta^{k}_j={\beta}_{j}^{0}  - \sum\nolimits_{l=1}^{L} {}^{r|l}{\Gamma}_{j}^{n+1} + {}^r\delta^{n+1}_{j},
    \label{eqn: general form D2C}
    \end{equation}
where the weight-sharing count is increased by one ($n\rightarrow n+1$) as agent $s$ has shared the weight information with agent $r$.

Moreover, ${}^r\delta^n_j$, referred to as \textit{the omission of the coverage progress} hereafter, is always nonnegative for any $n\in\mathbb{N}_0$.

\end{proposition}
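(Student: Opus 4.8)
The plan is to prove the nonnegativity of ${}^r\delta^n_j$ by induction on the weight-sharing count $n$, with the algebraic engine supplied by Lemma \ref{lemma: minimum equation} together with the elementary fact that a sum of pairwise maxima dominates each of the two sums being compared.

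First I would dispose of the base case $n=0$. Before any communication has occurred, agent $r$ has reduced the sample-point weights only through its own transport, so all cross-agent terms ${}^{r|l}\Gamma^0_j$ with $l\neq r$ vanish and both the original rule \eqref{eqn: D2C weight update} and the proposed rule \eqref{eqn: D2OC weight update} yield the identical remaining weight ${}^r\beta^k_j = \beta^0_j - {}^{r|r}\Gamma^0_j$. Comparing this with the representation \eqref{eqn: coverage progress - original} forces ${}^r\delta^0_j = 0 \ge 0$, establishing the base case.

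For the inductive step I would assume ${}^r\delta^n_j \ge 0$ and ${}^s\delta^m_j \ge 0$ for the current sharing counts of the two communicating agents. Writing the pre-sharing remaining weights as ${}^r\beta^k_j = \beta^0_j - R + {}^r\delta^n_j$ and ${}^s\beta^k_j = \beta^0_j - S + {}^s\delta^m_j$, with $R = \sum_l {}^{r|l}\Gamma^n_j$ and $S = \sum_l {}^{s|l}\Gamma^m_j$, I would substitute these into the original sharing rule \eqref{eqn: D2C weight update} and invoke Lemma \ref{lemma: minimum equation} to put $\min({}^r\beta^k_j,{}^s\beta^k_j)$ in closed form. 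Subtracting the proposed-method value $\beta^0_j - M$, where $M = \sum_l \max({}^{r|l}\Gamma^n_j,{}^{s|l}\Gamma^m_j) = \sum_l {}^{r|l}\Gamma^{n+1}_j$ by the first line of \eqref{eqn: D2OC weight update}, collapses everything to the compact identity ${}^r\delta^{n+1}_j = M - \max\!\big(R - {}^r\delta^n_j,\, S - {}^s\delta^m_j\big)$. This simultaneously reproduces the stated update \eqref{eqn: general form D2C} with the count advanced to $n+1$, settling the first assertion of the proposition.

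The main obstacle, and the crux of the argument, is the final sign estimate on this identity. Here I would note that each summand of $M$ is a maximum that dominates the corresponding summand of $R$ and of $S$, giving $M \ge R$ and $M \ge S$; this max-of-each-term gap is precisely the coverage information that the aggregate, min-based original rule discards. Combining these bounds with the induction hypotheses ${}^r\delta^n_j \ge 0$ and ${}^s\delta^m_j \ge 0$ yields $R - {}^r\delta^n_j \le R \le M$ and $S - {}^s\delta^m_j \le S \le M$, whence $\max\!\big(R - {}^r\delta^n_j,\, S - {}^s\delta^m_j\big) \le M$ and therefore ${}^r\delta^{n+1}_j \ge 0$, closing the induction. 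The one point demanding care is the bookkeeping between consecutive sharing instants: I would argue that the own-work reductions enter both schemes identically, so that ${}^r\delta^n_j$ is altered only at communication events and the induction on $n$ indeed tracks every update to the omission term.
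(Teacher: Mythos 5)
Your proof is correct and follows essentially the same route as the paper: induction on the sharing count from the base case ${}^r\delta^0_j=0$, using Lemma \ref{lemma: minimum equation} and the identity $\sum_l({}^{r|l}\Gamma^n_j+{}^{s|l}\Gamma^m_j)=M+m$ to isolate ${}^r\delta^{n+1}_j$. Your closed form ${}^r\delta^{n+1}_j=M-\max\bigl(R-{}^r\delta^n_j,\,S-{}^s\delta^m_j\bigr)$ is algebraically identical to the paper's $\min\bigl(S-m+{}^r\delta^n_j,\,R-m+{}^s\delta^m_j\bigr)$ since $M-R=S-m$ and $M-S=R-m$, and your bound via $M\geq\max(R,S)$ is equivalent to the paper's ${}^r\delta^{n+1}_j\geq\min\bigl({}^r\delta^n_j,{}^s\delta^m_j\bigr)$.
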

\begin{proof}
For agent $r$, the remaining weight when no weight sharing has occurred ($n=0$) is computed using Equation \eqref{eqn: weight update} by 
\begin{align}
{}^r\beta^{k}_j={\beta}_{j}^{0} -\sum\nolimits_{i=0}^{k} {}^{r}{\gamma}_{j}^{*|i} ={\beta}_{j}^{0} - {}^{r|r}{\Gamma}^{0}_{j}.\label{eqn: unshared_remaining_weight}
\end{align}
By directly comparing \eqref{eqn: coverage progress - original} and \eqref{eqn: unshared_remaining_weight}, it can be shown that ${}^{r}{\delta}_j^{0}=0$.

For $n>0$, consider agent $r$ (agent $s$) has engaged in the weight sharing $n$ times ($m$ times). The remaining weight for agents $r$ and $s$, respectively, are represented by
\begin{align*}
    {}^r\beta^{k}_j={\beta}_{j}^{0}  - \sum_{l=1}^{L} {}^{r|l}{\Gamma}_{j}^{n} + {}^r\delta^{n}_{j}
    ,\ {}^s\beta^{k}_j={\beta}_{j}^{0}  - \sum_{l=1}^{L} {}^{s|l}{\Gamma}_{j}^{m} +{}^s\delta^{m}_{j}.
\end{align*}
If these two agents update the weight information on $q_j$ using the original weight-sharing method in \eqref{eqn: D2C weight update}, the remaining weight on $q_j$ for agent $r$ is computed by
{\allowdisplaybreaks
\begin{align*}
    &{}^r\beta^{k}_j=\min\left({}^r\beta^{k}_j, {}^s\beta^{k}_j\right) = \frac{1}{2}\left({}^r\beta^{k}_j+{}^s\beta^{k}_j-|{}^r\beta^{k}_j-{}^s\beta^{k}_j|\right)
    \\&\begin{aligned}\overset{(a)}{=}&{\beta}_{j}^{0} - \sum_{l=1}^{L} \left\{{}^{r|l}\Gamma^{n}_j+{}^{s|l}\Gamma^{m}_j\right\}+\frac{1}{2}\left\{\sum_{l=1}^{L} \left\{{}^{r|l}\Gamma^{n}_j+{}^{s|l}\Gamma^{m}_j\right\}\right.
    \\&+{}^{r}\delta^{n}_{j}+{}^{s}\delta^{m}_{j}-\left.\left|- \sum_{l=1}^{L} {}^{r|l}{\Gamma}_{j}^{n} + {}^r\delta^{n}_{j}+ \sum_{l=1}^{L} {}^{s|l}{\Gamma}_{j}^{m} - {}^s\delta^{m}_{j}\right|\right\}\end{aligned}
    \\&\overset{(b)}{=}{\beta}_{j}^{0} - \sum_{l=1}^{L} \max\left({}^{r|l}{\Gamma}_{j}^{n},{}^{s|l}{\Gamma}_{j}^{m}\right)- \sum_{l=1}^{L} \min\left({}^{r|l}{\Gamma}_{j}^{n},{}^{s|l}{\Gamma}_{j}^{m}\right)
    \\&\quad+\frac{1}{2}\left\{\sum_{l=1}^{L} {}^{s|l}{\Gamma}_{j}^{m}+{}^{r}\delta^{n}_{j}+\sum_{l=1}^{L} {}^{r|l}{\Gamma}_{j}^{n}+{}^{s}\delta^{m}_{j}\right.
    \\&\quad-\left.\left|\sum_{l=1}^{L} {}^{s|l}{\Gamma}_{j}^{m} + {}^r\delta^{n}_{j}- \left(\sum_{l=1}^{L} {}^{r|l}{\Gamma}_{j}^{n}  + {}^s\delta^{m}_{j}\right)\right|\right\}
    \\
    &\overset{(c)}{=}{\beta}_{j}^{0} - \sum_{l=1}^{L} {}^{r|l}{\Gamma}_{j}^{n+1}- \sum_{l=1}^{L} \min\left({}^{r|l}{\Gamma}_{j}^{n},{}^{s|l}{\Gamma}_{j}^{m}\right)
    \\&\quad+\min\left(\sum_{l=1}^{L} {}^{s|l}{\Gamma}_{j}^{m}+{}^r\delta^{n}_{j},\sum_{l=1}^{L} {}^{r|l}{\Gamma}_{j}^{n} +  {}^s\delta^{m}_{j}\right)
    \\&={\beta}_{j}^{0} - \sum_{l=1}^{L} {}^{r|l}
    {\Gamma}_{j}^{n+1}+{}^r\delta^{n+1}_{j}.
    \nonumber
\end{align*}
}
In deriving the equality (a), the equation \eqref{eqn: coverage progress - original} is applied. 
For the equality (b), ${}^{r|l}\Gamma^{n}_j+{}^{s|l}\Gamma^{m}_j$ was rewritten by $\max\left({}^{r|l}{\Gamma}_{j}^{n},{}^{s|l}{\Gamma}_{j}^{m}\right)+\min\left({}^{r|l}{\Gamma}_{j}^{n},{}^{s|l}{\Gamma}_{j}^{m}\right)$. Finally, in deriving the equality (c), 
both the first equation in \eqref{eqn: D2OC weight update} and Lemma \ref{lemma: minimum equation} were utilized. 

Then, the omission of the coverage progress at the sharing count $n+1$, denoted by ${}^r\delta^{n+1}_{j}$, leads to
{\allowdisplaybreaks
\begin{align}
& \begin{aligned}{}^r\delta^{n+1}_{j}=&- \sum\nolimits_{l=1}^{L} \min\left({}^{r|l}{\Gamma}_{j}^{n},{}^{s|l}{\Gamma}_{j}^{m}\right)
    \\&+\min\left(\sum\nolimits_{l=1}^{L} {}^{s|l}{\Gamma}_{j}^{m}+{}^r\delta^{n}_{j},\sum\nolimits_{l=1}^{L} {}^{r|l}{\Gamma}_{j}^{n} +  {}^s\delta^{m}_{j}\right)\end{aligned}\nonumber
    \\&=\min\left(\sum\nolimits_{l=1}^{L}\left\{{}^{s|l}{\Gamma}_{j}^{m}-\min\left({}^{r|l}{\Gamma}_{j}^{n},{}^{s|l}{\Gamma}_{j}^{m}\right)\right\}+{}^r\delta^{n}_{j}\right.,\nonumber
    \\&\quad\left.\sum\nolimits_{l=1}^{L} \left\{{}^{r|l}{\Gamma}_{j}^{n}-\min\left({}^{r|l}{\Gamma}_{j}^{n},{}^{s|l}{\Gamma}_{j}^{m}\right)\right\}+{}^s\delta^{m}_{j}\right)\nonumber
    \\&\geq \min\left({}^r\delta^{n}_{j},{}^s\delta^{m}_{j}\right).
    \nonumber    
\end{align}
}
Since it is already proved ${}^r\delta^{0}_{j}=0$ (similarly ${}^s\delta^{0}_{j}=0$) at the beginning of this proof process, it is guaranteed based on the above inequality that ${}^r\delta^{n}_{j} \geq 0$ for any $r$ and $n\in\mathbb{N}_0$.
\end{proof}
Following Proposition \ref{prop: general form minimum share}, the inequality relationship of the remaining weight for the centralized case, the original sharing method \eqref{eqn: D2C weight update}, and the proposed sharing method \eqref{eqn: D2OC weight update} is established as follows.
\begin{theorem}\label{theorem: proposed sharing method} 

For the density-driven optimal control strategy, consider the three distinct weight-sharing methods: centralized, original \eqref{eqn: D2C weight update}, and proposed \eqref{eqn: D2OC weight update}, where the remaining weight of the sample-point $q_j$ at time $k$ for the centralized case is represented by
\begin{align}
\beta^{k}_j|_{\text{centralized}}={\beta}_{j}^{0}  - \sum\nolimits_{l=1}^{L} {}^{l|l}{\Gamma}_{j}^{n}. \label{eqn: actual remaining weight}
\end{align}

Then, the remaining weight when applying each method has the following relationship.
\begin{align}
\beta^{k}_j|_{\text{centralized}}\leq{}^{r}\beta^{k}_j|_{\text{proposed}}\leq{}^{r}\beta^{k}_j|_{\text{original}} \label{eqn: comp_sharing}
\end{align}
\begin{proof}
The second inequality in \eqref{eqn: comp_sharing} can be shown according to \eqref{eqn: D2OC weight update} and Proposition \ref{prop: general form minimum share} by
\begin{align}
\underbrace{{\beta}_{j}^{0}  - \sum\nolimits_{l=1}^{L} {}^{r|l}{\Gamma}_{j}^{n}}_{{}^{r}\beta^{k}_j|_{\text{proposed}}}\leq\underbrace{{\beta}_{j}^{0}  - \sum\nolimits_{l=1}^{L} {}^{r|l}{\Gamma}_{j}^{n}+{}^{r}{\delta}^{n}_{j}}_{{}^{r}\beta^{k}_j|_{\text{original}}}, \nonumber
\end{align}
since ${}^{r}{\delta}^{n}_{j}\geq 0$, $\forall n\in\mathbb{N}_0$.

For the first inequality, it is obvious that, in the centralized case, the summation of the coverage progress, $\sum_{l=1}^{L} {}^{l|l}{\Gamma}_{j}^{n}$, is greater than or equal to $\sum_{l=1}^{L} {}^{r|l}{\Gamma}_{j}^{n}$ for the proposed decentralized weight-sharing method. This is because, in the centralized one, each agent has access to the weight information on $q_j$ from all other agents at each discrete time.

As a consequence, the following relationship holds.
\begin{align}
\underbrace{{\beta}_{j}^{0}  - \sum_{l=1}^{L} {}^{l|l}{\Gamma}_{j}^{n}}_{\beta^{k}_j|_{\text{centralized}}}\leq\underbrace{{\beta}_{j}^{0}  - \sum_{l=1}^{L} {}^{r|l}{\Gamma}_{j}^{n}}_{{}^{r}\beta^{k}_j|_{\text{proposed}}}\leq\underbrace{{\beta}_{j}^{0}  - \sum_{l=1}^{L} {}^{r|l}{\Gamma}_{j}^{n}+{}^{r}{\delta}^{n}_{j}}_{{}^{r}\beta^{k}_j|_{\text{original}}} \nonumber
\end{align}
\end{proof}
\end{theorem}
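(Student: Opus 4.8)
The plan is to establish the two inequalities in \eqref{eqn: comp_sharing} separately, since they rest on different mechanisms. The right inequality, ${}^{r}\beta^{k}_j|_{\text{proposed}}\leq{}^{r}\beta^{k}_j|_{\text{original}}$, is a direct consequence of the sign result already proved in Proposition \ref{prop: general form minimum share}, whereas the left inequality, $\beta^{k}_j|_{\text{centralized}}\leq{}^{r}\beta^{k}_j|_{\text{proposed}}$, requires a term-by-term comparison of the accumulated coverage progress under full versus partial information.

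For the right inequality, both remaining weights share the common baseline $\beta^0_j-\sum_{l=1}^L {}^{r|l}\Gamma^n_j$: by \eqref{eqn: D2OC weight update} the proposed method equals this baseline exactly, while by \eqref{eqn: coverage progress - original} the original method equals the same baseline augmented by the omission term ${}^r\delta^n_j$. Proposition \ref{prop: general form minimum share} guarantees ${}^r\delta^n_j\geq 0$ for every $n\in\mathbb{N}_0$, so the original remaining weight is at least the proposed one, which establishes ${}^{r}\beta^{k}_j|_{\text{proposed}}\leq{}^{r}\beta^{k}_j|_{\text{original}}$. This step is essentially immediate once Proposition \ref{prop: general form minimum share} is invoked.

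The more delicate left inequality reduces to showing $\sum_{l=1}^L {}^{r|l}\Gamma^n_j \leq \sum_{l=1}^L {}^{l|l}\Gamma^n_j$, i.e., that a decentralized agent $r$ can never over-count the total coverage progress relative to the centralized bookkeeping. I would argue term by term from the definition \eqref{eqn: coverage progress}: the centralized entry ${}^{l|l}\Gamma^n_j=\sum_{i=0}^{k}{}^{l}\gamma_j^{*|i}$ sums agent $l$'s optimal transports up to the current time $k$, whereas the entry available to $r$ is ${}^{r|l}\Gamma^n_j=\sum_{i=0}^{k^{l\rightarrow r}}{}^{l}\gamma_j^{*|i}$ with $k^{l\rightarrow r}\leq k$. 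Because each increment ${}^{l}\gamma_j^{*|i}$ solves the transport problem \eqref{eqn: Weight update problem} and is therefore nonnegative, truncating the sum at the earlier index $k^{l\rightarrow r}$ can only decrease it, giving ${}^{r|l}\Gamma^n_j\leq{}^{l|l}\Gamma^n_j$ for each $l$ (with equality for $l=r$). Summing over $l$ and subtracting from $\beta^0_j$ reverses the inequality into the desired $\beta^k_j|_{\text{centralized}}\leq{}^{r}\beta^k_j|_{\text{proposed}}$, and chaining the two bounds yields \eqref{eqn: comp_sharing}.

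The main obstacle is making the left inequality rigorous rather than merely asserting it is obvious: one must pin down precisely that the centralized progress corresponds to the full history up to time $k$ for every agent, while the progress seen by $r$ is a truncation at the last communicated instant $k^{l\rightarrow r}$. Once the nonnegativity of the transport increments and the ordering $k^{l\rightarrow r}\leq k$ are in place, the comparison is monotone and the conclusion follows; the only subtlety is ensuring the definitions in \eqref{eqn: coverage progress} are interpreted consistently across the centralized and decentralized settings.
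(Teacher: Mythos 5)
Your proof is correct and follows essentially the same route as the paper's: the right inequality via the nonnegativity of ${}^r\delta^n_j$ from Proposition \ref{prop: general form minimum share}, and the left inequality via comparing the accumulated coverage-progress sums under full versus partial information. Your term-by-term justification of the left inequality (truncation of each agent's transport history at $k^{l\rightarrow r}\leq k$ combined with nonnegativity of the increments ${}^l\gamma_j^{*|i}$) in fact makes rigorous a step the paper dismisses as obvious.
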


Based on Theorem \ref{theorem: proposed sharing method}, it is guaranteed that the proposed weight-sharing method \eqref{eqn: D2OC weight update} will yield a lower remaining weight than the original method, achieving less work redundancy and hence, higher coverage efficiency.

\begin{remark}{\bf (Asynchronous Communication)}
    The proposed D$^2$OC is executable asynchronously, making it highly practical for real-world multi-agent systems. Unlike many existing multi-agent systems that require synchronized communications and a global clock, D$^2$OC allows agents to function independently without the need for constant communication. Communication between agents only occurs infrequently when they are within each other's communication range. Moreover, the amount of data exchanged is minimal, consisting of small updates on sample-point weights, ensuring that even if communication occurs, it does not significantly affect system performance. This reduces concerns about communication delays or packet drops, making the system robust to asynchrony.
\end{remark}

\section{Simulation}\label{sec: sim}
To evaluate the performance of the D$^2$OC scheme as well as the proposed weight-sharing method, different types of simulations were carried out. The trajectories of the multi-agent system are presented by applying D$^2$OC, SMC, and D$^2$C schemes for the qualitative comparison, while the 2-Wasserstein distance was calculated as a metric for the quantitative comparison.
The energy flexibility of the proposed D$^2$OC scheme is also shown as the strength of D$^2$OC.
Moreover, the remaining weight comparison between the original and proposed weight-sharing methods is presented to validate Theorem \ref{theorem: proposed sharing method}.
\subsection{Performance Comparison between SMC and D$^2$OC}
In this subsection, two separate simulations were conducted: one for a simple symmetric reference distribution to validate and discuss the energy flexibility of the proposed control scheme and another for a more complicated reference distribution to evaluate the performance of the D$^2$OC scheme compared to the SMC method.

Although the proposed D$^2$OC scheme can be applied to both linear and nonlinear systems, the following first-order integrator $\mathsf{x}^{k+1} = {\mathsf{x}}^{k} + {u}^{k}$ is used for the performance comparison between SMC and D$^2$OC.

The major reason to adopt the simple first-order integrator is that the controller in SMC was developed mainly for the first- and second-order systems. A more complicated but realistic model (linearized quadrotor model) is employed for comparison with another controller in the next subsection.

In Fig. \ref{fig: SimA_energy flexibility}, the contour plot of the reference density function is presented, and the sample-points were generated accordingly.
A two-agent first-order integrator system was considered, where their initial positions are marked with the `x' symbol in Fig. \ref{fig: SimA_energy flexibility}.
Some parameters used in this simulation are shown in Table \ref{table: A parameter}.
Figs. \ref{fig: SimA_energy flexibility}(a)-(b) present the trajectories of two agents with D$^2$OC, and Figs. \ref{fig: SimA_energy flexibility}(c)-(d) are for the SMC method with 30 Fourier basis functions (notice that in the SMC method \cite{mathew2011metrics}, the infinite Fourier basis functions are required; however, the truncation is inevitable for the implementation). The operation time (or terminal time) of the two-agent system is set differently: 20 seconds for (a) and (c), and 200 seconds for (b) and (d).

\begin{figure}[h]
    \centering
    \includegraphics[width=0.90\linewidth]{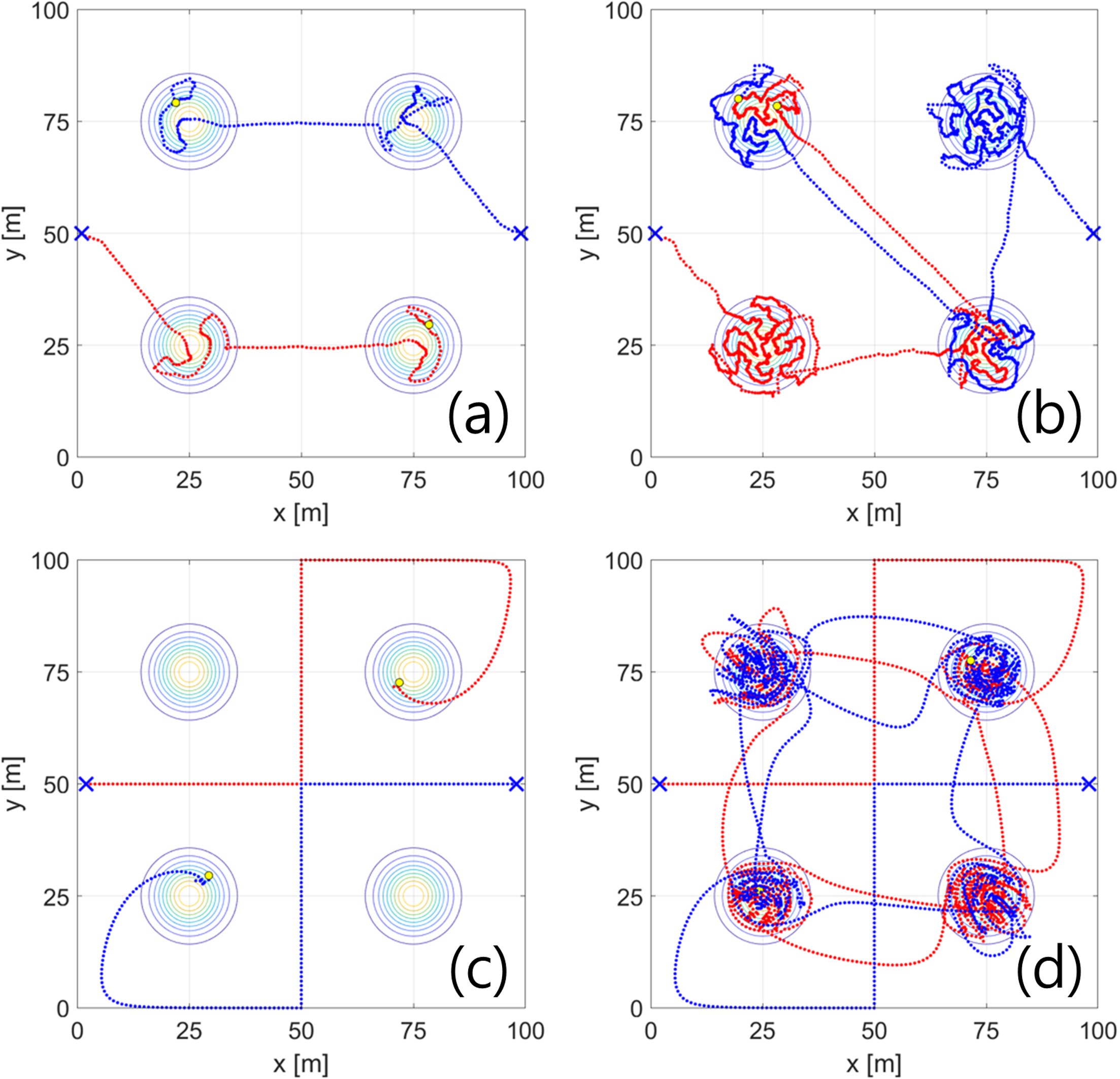}
    \caption{Trajectory comparison between the SMC method and the D$^2$OC scheme. The initial and final locations of agents are indicated by blue-cross and yellow-circle marks, respectively: (a) D$^2$OC with a terminal time of 20 seconds; (b) D$^2$OC with a terminal time of 200 seconds; (c) SMC with a terminal time of 20 seconds; (d) SMC with a terminal time of 200 seconds.}\label{fig: SimA_energy flexibility}
\end{figure}

For D$^2$OC, the limited coverage of the reference distribution is shown in Fig. \ref{fig: SimA_energy flexibility}(a) compared to the longer operation time in Fig. \ref{fig: SimA_energy flexibility}(b). This is expected, as the shorter operation time results in fewer agent-points, leading to a less accurate representation of the reference density. 
These two simulation results well represent the benefit of D$^2$OC, energy flexibility, as the initially available energy acts as a physical constraint given by the system itself.
The D$^{2}$OC scheme adjusts the amount of the weight for each agent-point to account for the terminal time (or the remaining energy) of each agent. This characteristic of the proposed control enables the coverage of more critical areas in a limited amount of time, resulting in energy-flexible control.

On the other hand, by observing the trajectories originating from the initial locations in Figs. \ref{fig: SimA_energy flexibility}(c) and (d), it is evident that the trajectories in Fig. \ref{fig: SimA_energy flexibility}(c) overlap with those in Fig. \ref{fig: SimA_energy flexibility}(d). 
The control input for the SMC method is computed based on the Fourier coefficients as well as the past trajectories of the agents, which is independent of the agents' operation time. 
These overlaps in the trajectories result from the control input's independence of the terminal time, leading to insufficient coverage as demonstrated in Fig. \ref{fig: SimA_energy flexibility}(c). 
Alternatively, this behavior is interpreted as the future multi-agent ergodic trajectories simply continuing the current trajectories, since the operation time cannot be incorporated into the SMC method.

The SMC method could not cover some of the high-priority regions (top-left and bottom-right), whereas D$^2$OC achieved coverage within the same operation time, as shown in Figs. \ref{fig: SimA_energy flexibility}(a) and (c).
This distinction is critical in many applications such as search-and-rescue, environmental monitoring, smart farming, and infrastructure inspection, where the goal must be achieved with limited resources.

Further, the D$^2$OC scheme offers several advantages over the SMC method, including greater coverage efficiency and improved handling of symmetry-related issues. Fig. \ref{fig: SimA_energy flexibility}(c) presents the irrational behavior of the agents, characterized by movement along the horizontal and vertical axes from their initial locations. This irrational behavior is caused by the symmetry of the reference density function and the initial locations of the agents. On the contrary, the D$^2$OC scheme exhibits no issue concerning the symmetry.
Moreover, the SMC method inherently causes the agents to circulate between distinct high-priority regions, as shown in Fig. \ref{fig: SimA_energy flexibility}(d). This feature of the SMC method leads to inefficient coverage due to the agents traveling into less significant areas between high-priority regions.

To compare the D$^2$OC scheme with the SMC method quantitatively, the second simulation was conducted on a preconstructed reference distribution, shown in Figs. \ref{fig: Sim A-1 reference map}(a) and (b) as the contour plot of the reference distribution function and its sample-points representation, respectively. Table \ref{table: A parameter} shows the parameters used in this simulation.

\begin{figure}[hb]
    \centering
    \includegraphics[width=0.90\linewidth]{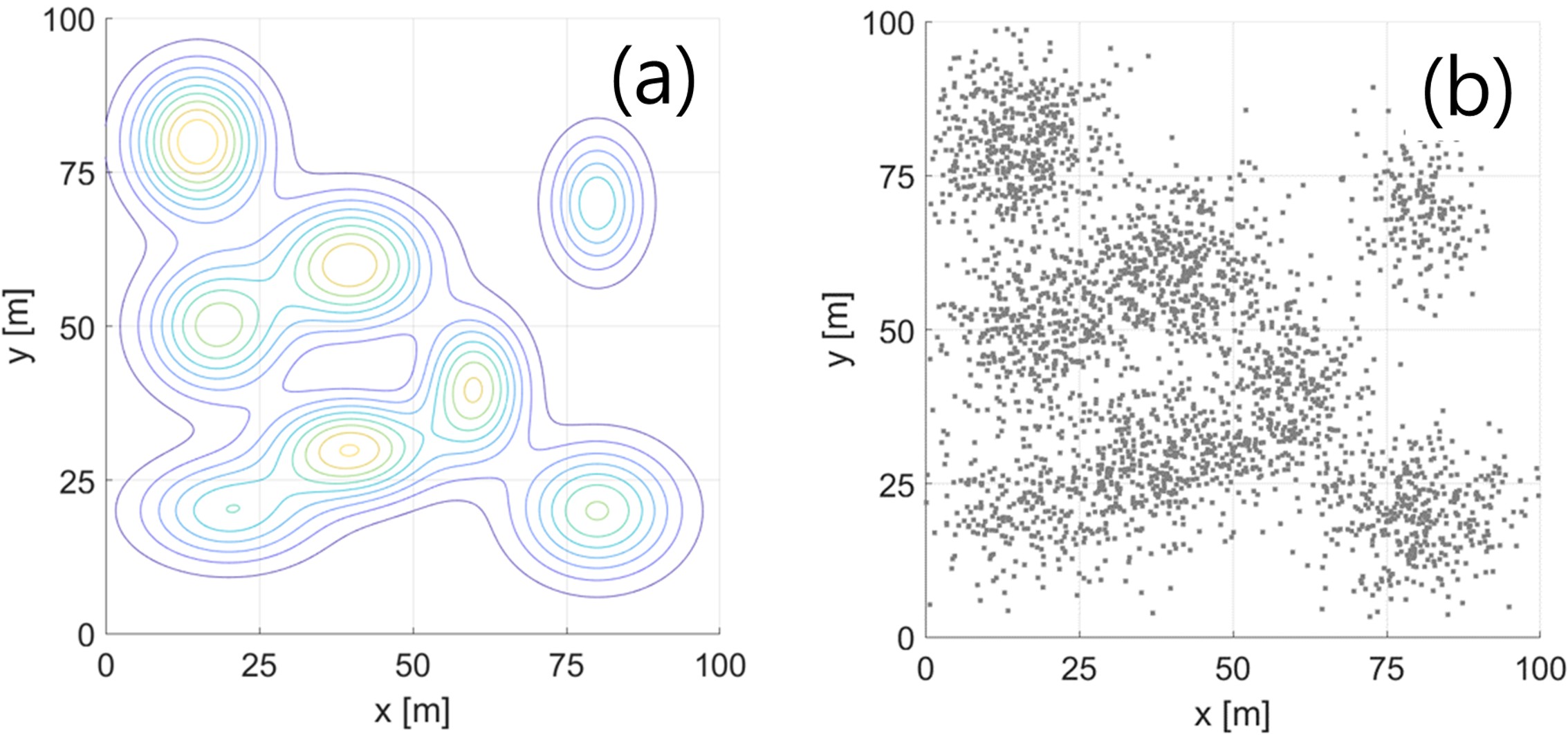}
    \caption{Reference distribution: (a) Contour plot of the density function; (b) Discrete sample-points distribution.}
    \label{fig: Sim A-1 reference map}
\end{figure}

\begin{table}[!hb]
\caption{Parameters of the Simulation for IV-A}
\label{table: A parameter}
\centering
\begin{tabular}{clll}
\hline
\multicolumn{2}{c}{Parameter Name or Meaning} & Symbol & Value \\ \hline

\multirow{6}{*}{\begin{tabular}{@{}c@{}}Environ.\\\&\\System\end{tabular}} 
& Domain size & -- & $100\,\text{m} \times 100\,\text{m}$ \\ \cline{2-4} 
& Number of agents & $L$ & 6 \\ \cline{2-4} 
& Input threshold & $u_{\text{max}}$ & $10\,\text{m/s}$ \\ \cline{2-4} 
& \begin{tabular}[c]{@{}l@{}}Total number of\\sample-points\end{tabular} & $N$ & 3000 \\ \cline{2-4}  
& Discrete-time interval & $\Delta t$ & $0.1\,\text{s}$ \\ \cline{2-4} 
& Terminal time & $t_f$ & \begin{tabular}[c]{@{}l@{}}100, 200, 300,\\400, 500\,\text{s}\end{tabular} \\ \hline

\multirow{2}{*}{D$^2$OC} 
& Penalty matrices & \multicolumn{2}{l}{$Q = \mathbf{0}$, $R = \mathbf{I}_2$} \\ \cline{2-4} 
& Horizon length & $T$ & 15 \\ \hline

\multirow{1}{*}{SMC} 
& Number of Fourier bases & $K$ & 10, 20, 30 \\ \hline

\end{tabular}
\end{table}

\begin{figure}[ht]
\centering
\includegraphics[width=0.90\linewidth]{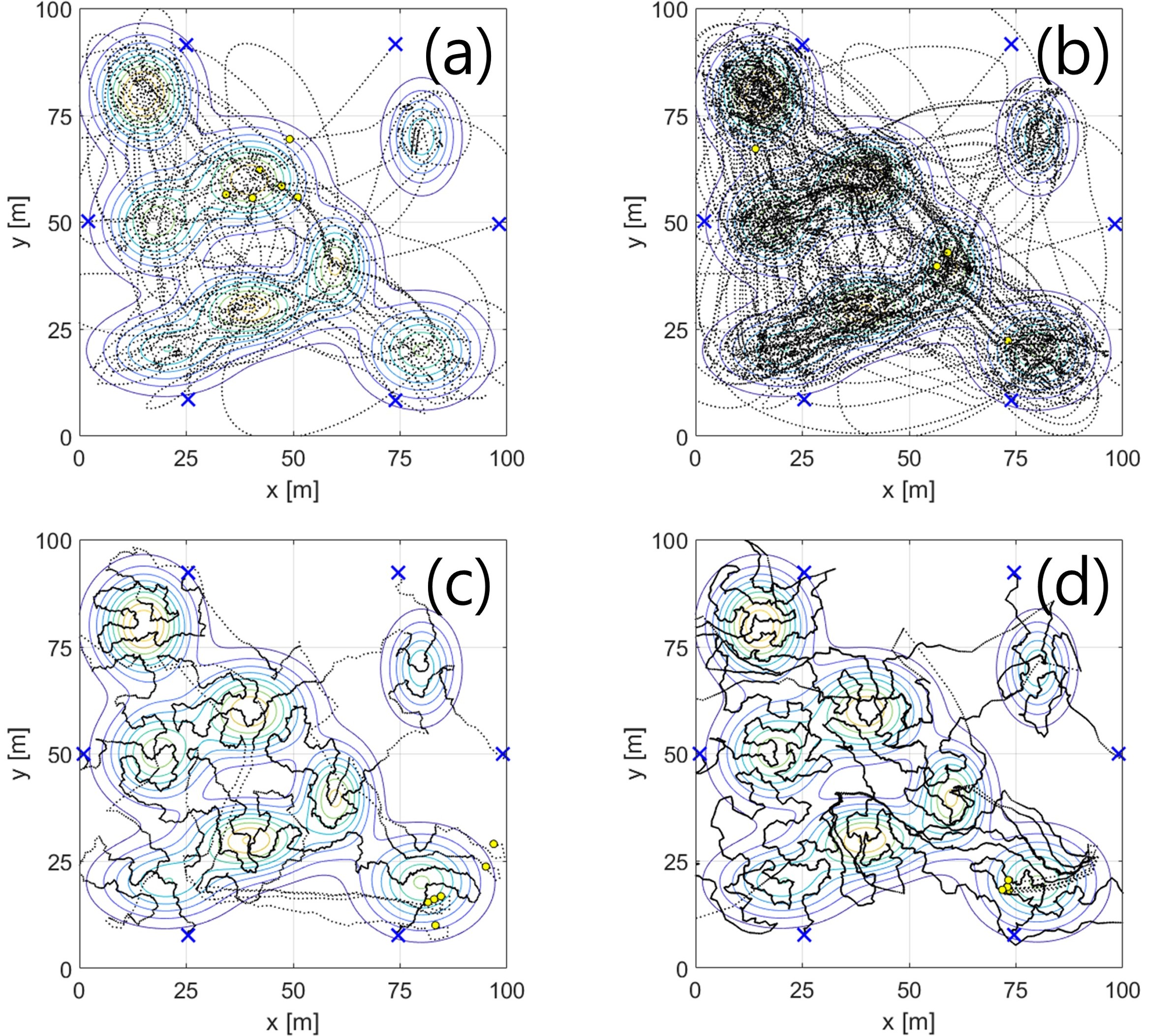}
	\caption{Agents' trajectories of SMC method (a-b) and D$^2$OC scheme (c-d) with varying terminal times. The initial and final locations of agents are indicated by blue-cross and yellow-circle marks, respectively: (a) SMC method (30 bases) with a terminal time of 100 seconds; (b) SMC method (30 bases) with a terminal time of 400 seconds; (c) D$^2$OC scheme with a terminal time of 100 seconds; (d) D$^2$OC scheme with a terminal time of 400 seconds.}\label{fig: SimA_trajectory_of_agents}
\end{figure}

\begin{figure}[ht]
    \centering
    \includegraphics[width=0.80\linewidth]{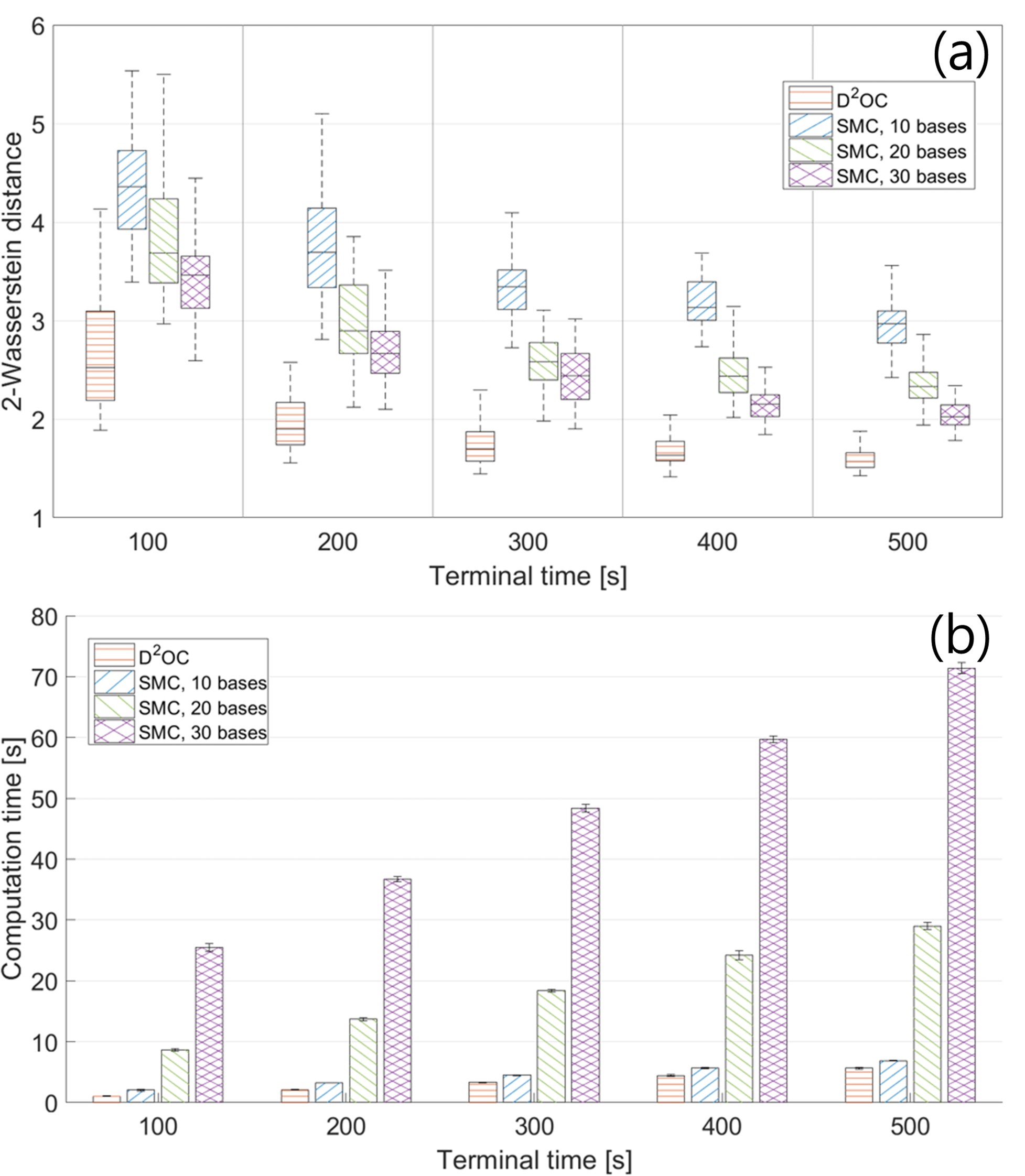}
	\caption{2-Wasserstein distance and Computation time of the SMC methods and the D$^{2}$OC scheme: (a) 2-Wasserstein distance; (b) Computation time.} \label{fig: SimA_comptime_wassdist}
\end{figure}

The simulations were performed with varying terminal times from 100 seconds to 500 seconds. Fig. \ref{fig: SimA_trajectory_of_agents} shows the discrete-time trajectories of agents in black dots when the terminal time is 100 seconds (Figs. \ref{fig: SimA_trajectory_of_agents}(a) and (c)) and 400 seconds (Figs. \ref{fig: SimA_trajectory_of_agents}(b) and (d)). The blue-cross and yellow-circle markers represent the initial and final locations of agents, respectively. These trajectory plots facilitate a qualitative assessment of coverage performance. It is evident from the SMC trajectories in Fig. \ref{fig: SimA_trajectory_of_agents}(b) that agents spent considerable time traveling low-priority areas, denoted by the low-value contours in the reference density function. This unnecessary travel is considerably reduced in D$^2$OC, as shown in Figs. \ref{fig: SimA_trajectory_of_agents}(c) and (d).

Figs. \ref{fig: SimA_comptime_wassdist}(a) and (b) present the Wasserstein distance and the computation time measured in both D$^2$OC and SMC methods. For the SMC method, three different numbers of Fourier bases---10, 20, and 30---are used. To reduce the undesired effect of agents' initial locations, 50 simulations were conducted for each case with random initial locations, and the results are presented as the mean values with their corresponding standard deviations. The same reference density distribution in Fig. \ref{fig: Sim A-1 reference map} was used.  
Fig. \ref{fig: SimA_comptime_wassdist}(a) shows the box plots for the 2-Wasserstein distance between two ensembles: the reference distribution and the distribution constructed by the agent-points. 
The 2-Wasserstein distance decreased as the terminal time increased for all cases. Moreover, the 2-Wasserstein distance of the D$^{2}$OC scheme was the smallest in all cases, indicating that the D$^2$OC scheme achieved a closer alignment of agent trajectories with the reference distribution than the SMC scheme.
 
Fig. \ref{fig: SimA_comptime_wassdist}(b) presents the computation time for each case, measured as the elapsed time between the start and end of each simulation. The D$^2$OC scheme also demonstrates superior performance compared to the SMC methods in terms of computation time. Although the computation times for the D$^{2}$OC method and the SMC method with 10 bases were similar, the D$^{2}$OC scheme achieved better performance in terms of 2-Wasserstein distance. This result clearly shows that, for a fixed computation time, the D$^{2}$OC scheme provides much better coverage of the reference distribution than the SMC method.

\subsection{Performance Comparison between D$^2$C and D$^2$OC}\label{sec: 4-B}
 To compare the performance of D$^2$OC with D$^2$C, a discrete-time linear quadrotor model \cite{sabatino2015quadrotor} is considered in this simulation. The state vector for this system at time $k$ is defined by 
 $\mathsf{x}^{k}:=[p_{x}^{k} \ dp_{x}^{k}\ \theta^{k} \ d\theta^{k} \  p_{y}^{k} \ dp_{y}^{k}\ \phi^{k} \  d\phi^{k}]^{\top}$, the input vector by $u^k=[\tau^k_{x} \ \tau^k_{y}]^{\top}$, and the output vector by $y^k=[p^k_{x} \ p^k_{y}]$. The state variables $p_{x}$ and $ p_{y}$ represent the $x$- and $y$-coordinate of the quadrotor, $\theta$ and $\phi$ are the roll and pitch angles, and $\tau_{x}$ and $\tau_{y}$ represent the torques applied along the $x$- and $y$-axes. The notation $d(\cdot)$ denotes the increment of the given variable
between consecutive time intervals. The small-angle assumption ($\theta,\phi \approx 0$) is made to neglect higher-order terms and approximate the transcendental terms to linear terms. External disturbances, such as wind, are assumed to be absent. The differential equations are linearized at the equilibrium point $\bar{x}^k = [p_{x}^{k}\ 0\ 0\ 0\ p_{y}^{k}\ 0\ 0\ 0]^\top$, so that they take the form described in \eqref{eqn: LTI system}. The detailed parameters and values used in the simulation are presented in Table \ref{table: B parameter}. It is noteworthy that the first and fifth diagonal entries of the penalty matrix $Q$ are set to zero. This is to set the penalties on the corresponding states, the 
$x$ and $y$-coordinates, to zero, thereby imposing no regulation on these states and allowing the agent to move freely.
  Simulations were conducted with varying terminal times as specified in Table \ref{table: B parameter}. For each condition, simulations were repeated 50 times while changing the agents' initial positions randomly. 

\begin{table}[!hb]
\caption{Parameters of the Simulation for IV-B}
\label{table: B parameter}
\centering
\begin{tabular}{clll}
\hline
\multicolumn{2}{c}{Parameter Name} & Symbol & Value \\ \hline

\multirow{8}{*}{\begin{tabular}{@{}c@{}}Environ.\\\&\\System\end{tabular}} 
& Domain size & -- & $100\,\text{m} \times 100\,\text{m}$ \\ \cline{2-4} 
& Number of quadrotors & $L$ & 6 \\ \cline{2-4} 
& Acceleration of gravity & $g$ & $9.81\,\text{m/s}^2$ \\ \cline{2-4} 
& Moment of inertia & $I_{xx}, I_{yy}$ & $0.0075\,\text{kg}\cdot\text{m}^2$ \\ \cline{2-4} 
& Discrete-time interval & $\Delta t$ & $0.1\,\text{s}$ \\ \cline{2-4} 
& \begin{tabular}[c]{@{}l@{}}Total number of\\sample-points\end{tabular} & $N$ & 3000 \\ \cline{2-4} 
& Communication range & $r_\text{comm}$ & $25\,\text{m}$ \\ \cline{2-4} 
& Terminal time & $t_f$ & \begin{tabular}[c]{@{}l@{}}60, 120, 180,\\240, 300\,\text{s}\end{tabular} \\ \hline

\multirow{2}{*}{D$^{2}$OC} 
& Penalty matrices & \multicolumn{2}{l}{\begin{tabular}[c]{@{}l@{}} $Q$ = $10^{-5}\times$diag(0,1,1,1,0,1,1,1)\\$R$ = $\mathbf{I}_2$ \end{tabular}} \\ \cline{2-4} 
& Horizon length & $T$ & 15 \\ \hline

\multirow{2}{*}{D$^{2}$C} 
& Penalty matrices (LQR) & \multicolumn{2}{l}{$Q = \mathbf{I}_8$, $R = \mathbf{I}_2$} \\ \cline{2-4} 
& \begin{tabular}[c]{@{}l@{}}Number of\\local sample-points\end{tabular} & $h$ & 2, 4, 5 \\ \hline

\end{tabular}
\end{table}

\begin{figure}[h]
    \centering
    \includegraphics[width=0.90\linewidth]{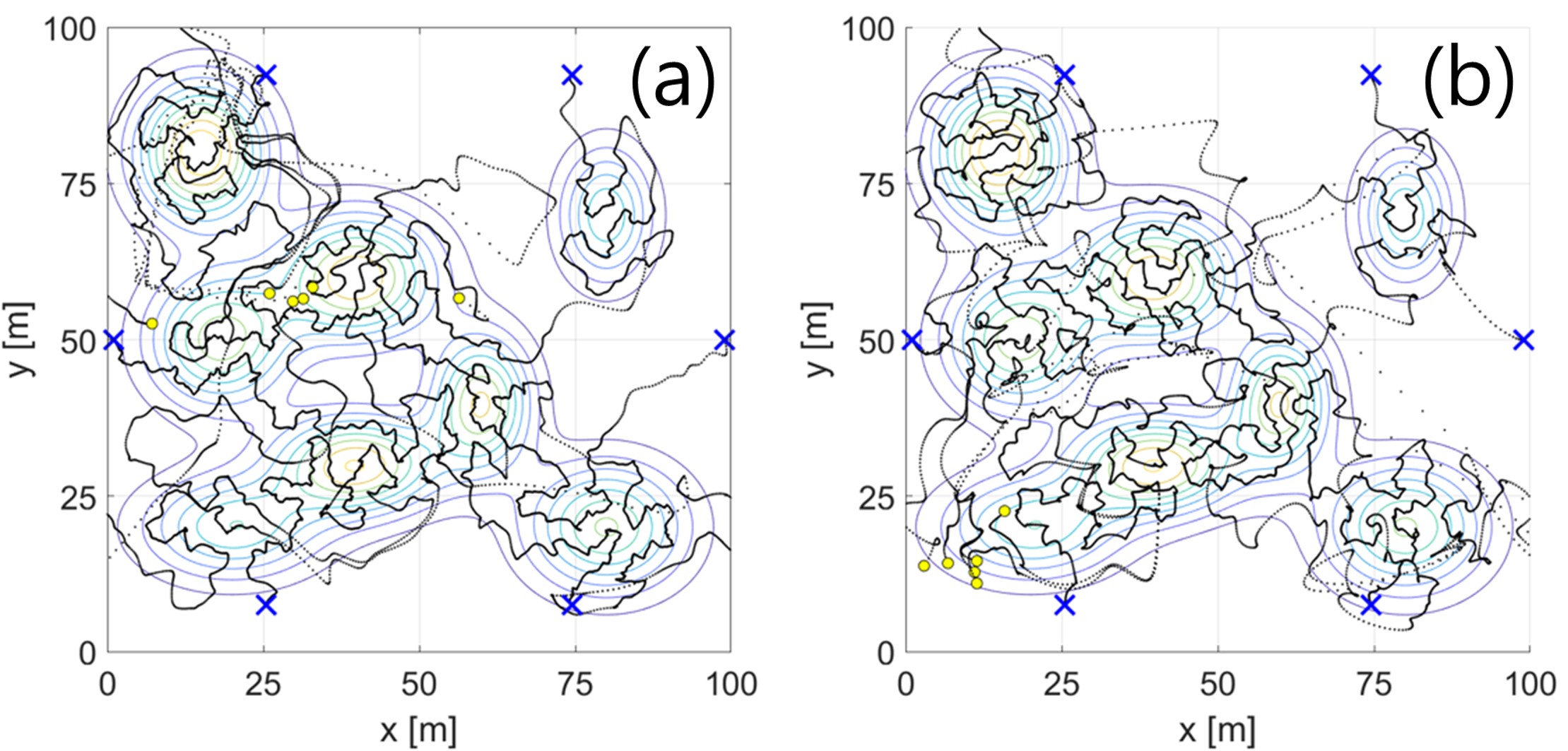}
    \caption{Agents' trajectories with D$^2$C and D$^2$OC schemes. The initial and final locations of agents are indicated by blue-cross and yellow-circle marks, respectively: (a) D$^{2}$C scheme (h = 5) with a terminal time of 240 seconds; (b) D$^{2}$OC scheme with a terminal time of 240 seconds.}\label{fig: SimB_trajectory_of_agents}
\end{figure}

  The trajectories of agents for D$^2$C and D$^2$OC schemes are shown in Fig. \ref{fig: SimB_trajectory_of_agents}. Notice that the D$^2$C scheme does not incorporate the agent's dynamics into the trajectory planning but only generates the next waypoint for the agent to visit. Thus, we applied an LQR controller for the quadrotor system to follow the generated waypoint via D$^2$C.
  One of the dominant parameters in D$^2$C is the number of local sample-points, $h$, which may significantly affect the performance of D$^2$C. A total of three different values for $h$ were tested, as shown in Table \ref{table: B parameter}. Determining the optimal value for $h$ often relies on heuristics and can be time-consuming, which is another drawback of D$^2$C. In contrast, D$^2$OC determines the local sample-points based on the OT theory constraint in \eqref{eqn: OT constraint}, eliminating the need to set $h$.

\begin{figure}[ht]
    \centering
    \includegraphics[width=0.90\linewidth]{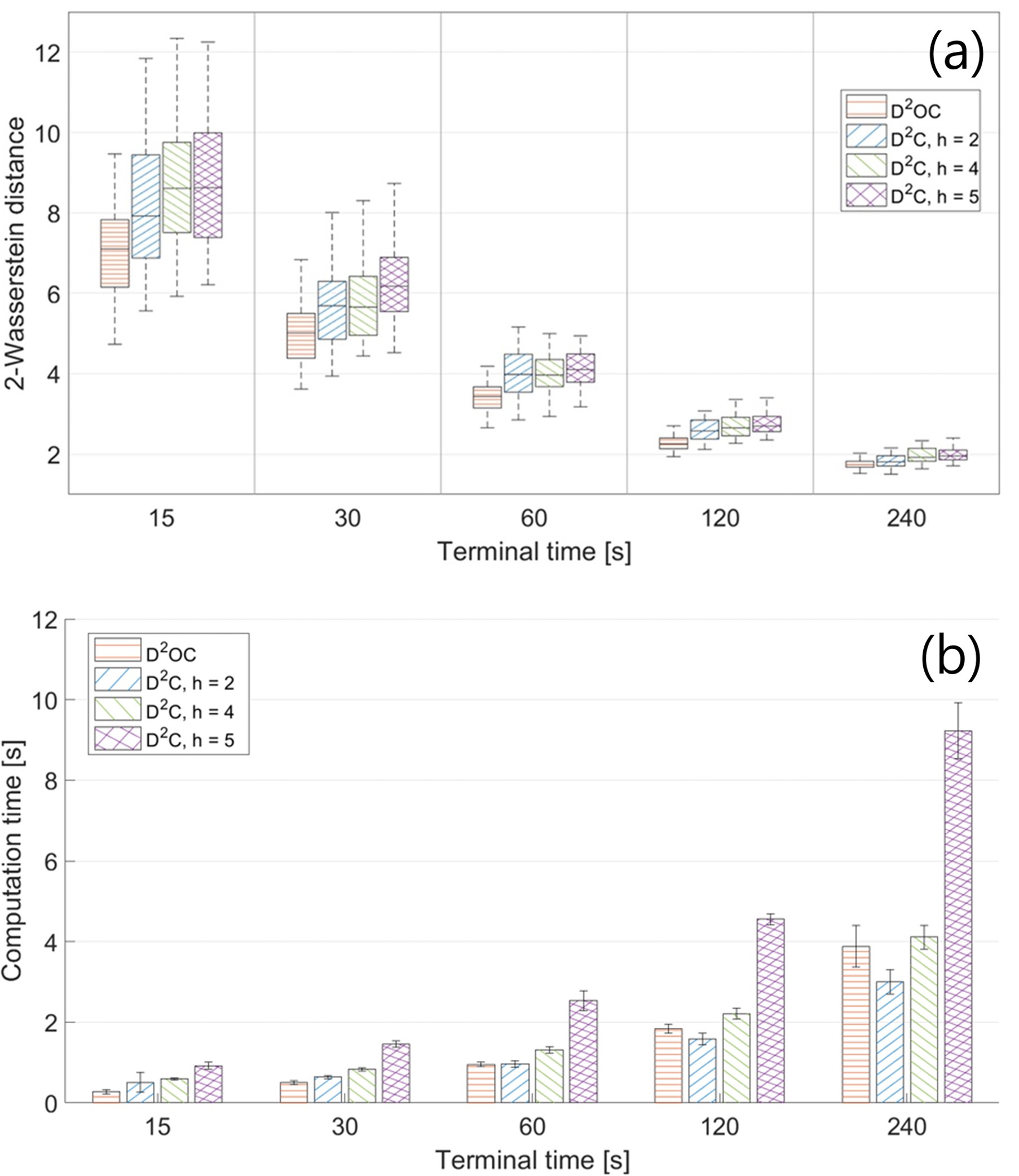}
	\caption{2-Wasserstein distance and the computation time of the D$^{2}$C and the D$^{2}$OC schemes (lower is better): (a) 2-Wasserstein distance; (b) the computation time.} \label{fig: SimB_comptime_wassdist}
\end{figure}

  Despite the fact that both trajectories appear similar at first glance in Fig. \ref{fig: SimB_trajectory_of_agents}, the quantitative results in Fig. \ref{fig: SimB_comptime_wassdist} demonstrate noticeable differences in performance.
  The 2-Wasserstein distance and computation times with varying terminal times are shown in Figs. \ref{fig: SimB_comptime_wassdist}(a) and (b). In Fig. \ref{fig: SimB_comptime_wassdist}(a), the 2-Wasserstein distance decreased as the terminal time increased, which indicates that the dissimilarity between the reference distribution and the distribution of agent-points is diminished. This is quite natural since the multi-agent system can better represent the given reference density distribution with more agent-points.
  The 2-Wasserstein distance of the D$^2$OC scheme was the smallest among the control schemes, which in turn implies that the D$^2$OC scheme outperformed the D$^2$C scheme in terms of the coverage of the reference distribution. 
  
 In Fig. \ref{fig: SimB_comptime_wassdist}(b), the average computation times for 50 simulations and $\pm$1 standard deviations are described. The result shows that the computation time of the D$^{2}$C scheme considerably increased with higher $h$. This is because in the D$^{2}$C scheme, the optimal path is identified by examining the costs of all possible permutations connecting $h$ local sample-points. The number of possible permutations increases as the number of local sample-points rises, thereby increasing the computational effort. On the other hand, the primary factors affecting the computation time in D$^2$OC are the length of the horizon and the dimension of the state vector. The use of the analytic solution in \eqref{eqn: optimal_u} enables D$^2$OC to achieve computational efficiency, leading to fast computation times.

\subsection{Performance Comparison in Weight Sharing}
In the decentralized control setup, the performance of the multi-agent coverage depends on how each agent shares the weight of the sample-point with nearby agents within the communication range.
The simulation was conducted to evaluate the performance difference between the original and proposed weight-sharing methods. 
The discrete-time linear quadrotor model described in Section \ref{sec: 4-B} was considered with most of the parameters identical to those in Table \ref{table: B parameter}. A total of 20 quadrotors were considered for two different scenarios in which the proposed Stages A and B were applied identically, while the weight-sharing methods differed---the original and proposed weight-sharing methods. This is for a fair performance comparison of the two weight-sharing methods while excluding other factors such as a different controller.

The penalty matrices for the D$^2$OC scheme were set to $Q=10^{-3}\times \text{diag}(0,\,10^{-2},\,1,\,1,\,0,\,10^{-2},\,1,\,1)$ and $R = \mathbf{I}_2$. 
For each scenario, the simulations were repeated 100 times with random initial locations for the quadrotors. Fig. \ref{fig: SimC_domain_traj}(a) presents the reference distribution, and (b) shows the resulting trajectories of agents. 
\begin{figure}
    \centering
\includegraphics[width=0.90\linewidth]{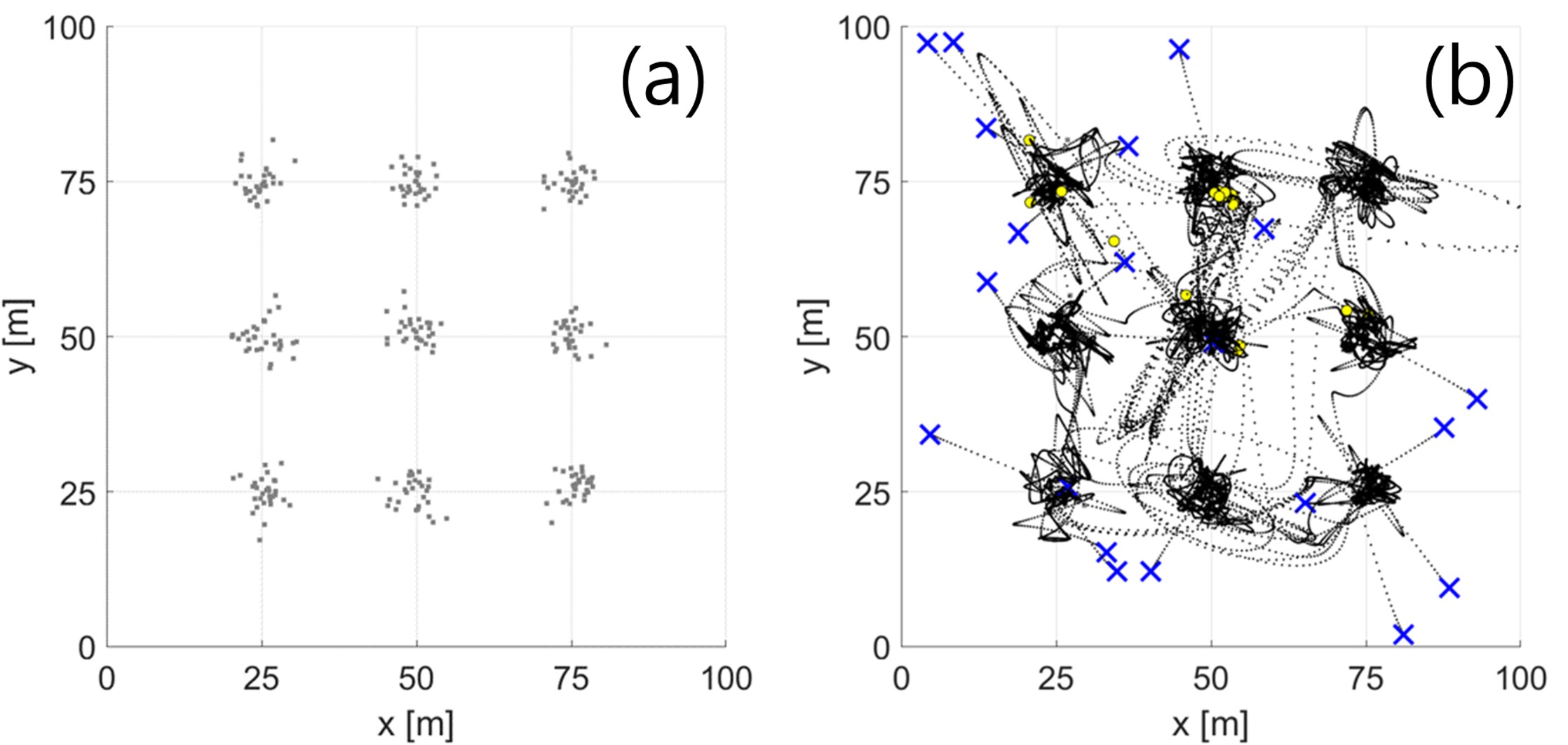}
    \caption{(a) Reference distribution; (b) Agents' trajectories using the D$^2$OC scheme that implements the proposed weight-sharing method. The initial and final locations of agents are indicated by blue-cross and yellow-circle marks, respectively. }\label{fig: SimC_domain_traj}
\end{figure}
Figs. \ref{fig: remaining 1}(a) and (b) show the temporal evolution of the average remaining weight for two communication ranges, $r_\text{comm}=1$ m and $r_\text{comm}=2$ m, where the average remaining weight is defined by
$
    \begin{aligned}
        &\text{Average Remaining Weight}
    \end{aligned}=\frac{1}{L}\left(\sum_{r=1,\,j=1}^{L,N} {}^{r}\beta_{j}^{k}\right)  \times 100\ [\%] \nonumber
$.

\begin{figure}[ht]
    \centering
    \includegraphics[width=0.90\linewidth]{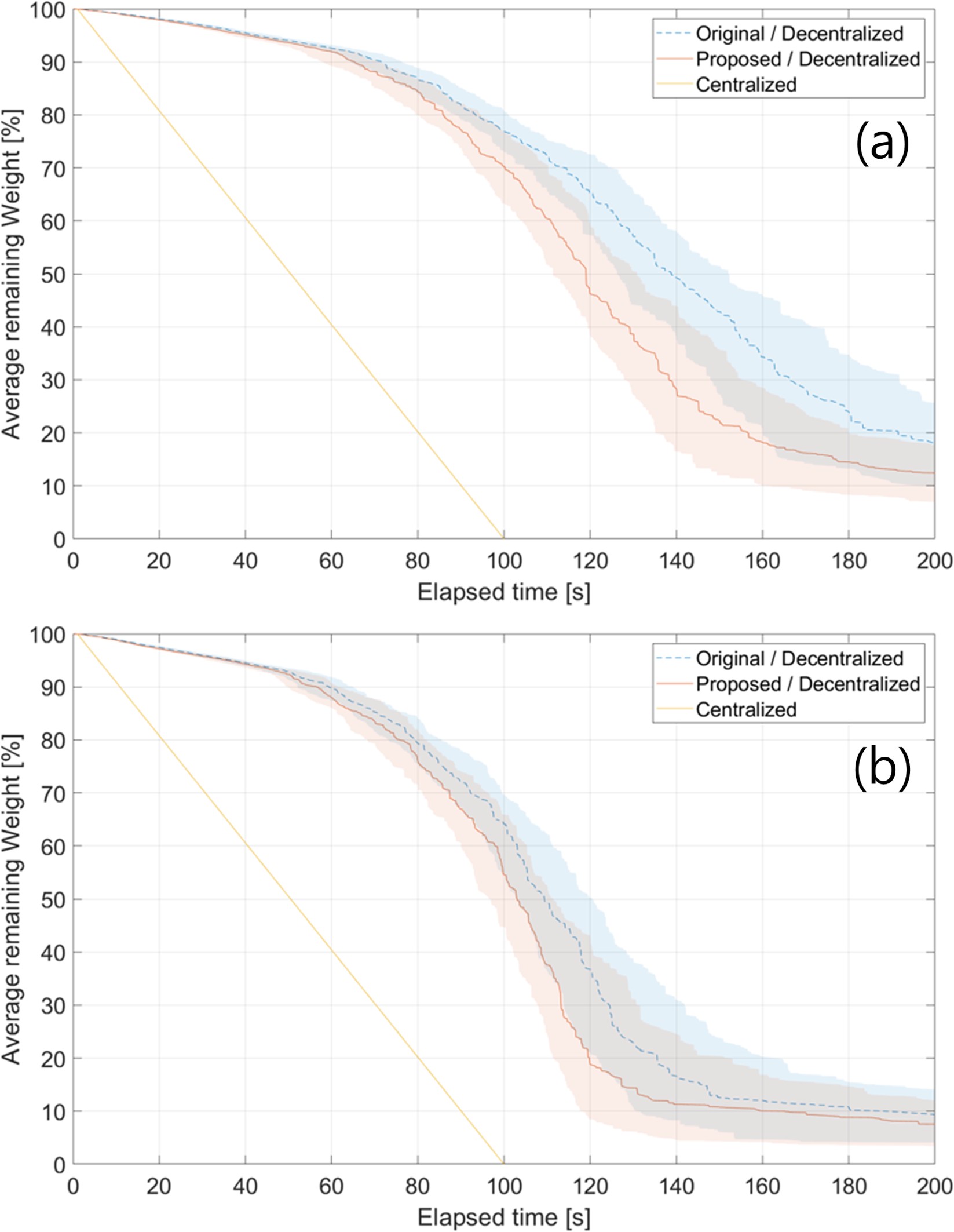}
    \caption{Temporal evolution of the remaining weight during the simulation for different communication ranges: (a) $r_\text{comm} = 1$ m; (b) $r_\text{comm} = 2$ m.}\label{fig: remaining 1}
\end{figure}

The median values for the original and proposed weight-sharing method are shown as a dashed blue line and a solid red line, respectively. The yellow solid line represents the average remaining weight for the centralized case, which can be considered the ideal scenario with all-to-all inter-agent communication. The shaded areas around the lines indicate the range of $\pm$1 standard deviation from the median values. 

The curve of the proposed weight-sharing method is closer to that of the centralized case than the original weight-sharing method, indicating higher efficiency. 
As shown in Proposition \ref{prop: general form minimum share}, the original weight-sharing method may accumulate the omission of the coverage progress, ${}^{r}\delta^{n}$, at each weight-sharing stage. This omission of the coverage progress leads to inaccuracies in the remaining weights of agents, causing more work redundancy.

In the case of $r_\text{comm}=2$ m, Fig. \ref{fig: remaining 1}(b) depicts that the curves of both sharing methods are more closely aligned to the curve of the centralized case, with a narrower gap between them. This observation is expected, given that the centralized case assumes an infinite communication range.

Another metric used to evaluate the coverage efficiency is work redundancy. This metric refers to the redundant efforts exerted by agents beyond what is required by the reference distribution. 
To measure how much redundant work
was done, a more formal notion of work redundancy is defined by $
    \text{Work Redundancy} = \left\{\left(\sum_{r,\,j=1,\,k=0}^{L,\,N,\,{}^{r}k_{f}} {}^{r}\gamma^{k}_{j}\right)-1\right\} \times 100\ [\%], \nonumber
$
where ${}^{r}k_{f}$ represents the terminal time of agent $r$.

The work redundancy of the agents is calculated and plotted in Fig. \ref{fig: Surplus work}. In the centralized case (ideal case), the work redundancy is zero. As shown in Fig. \ref{fig: Surplus work} for both cases ($r_\text{comm}=1$ m and $r_\text{comm}=2$ m), the proposed weight-sharing method yields lower work redundancy than the original method, thereby improving the overall efficiency of the agents' effort.

\begin{figure}[h]
    \centering
    \includegraphics[width=0.80\linewidth]{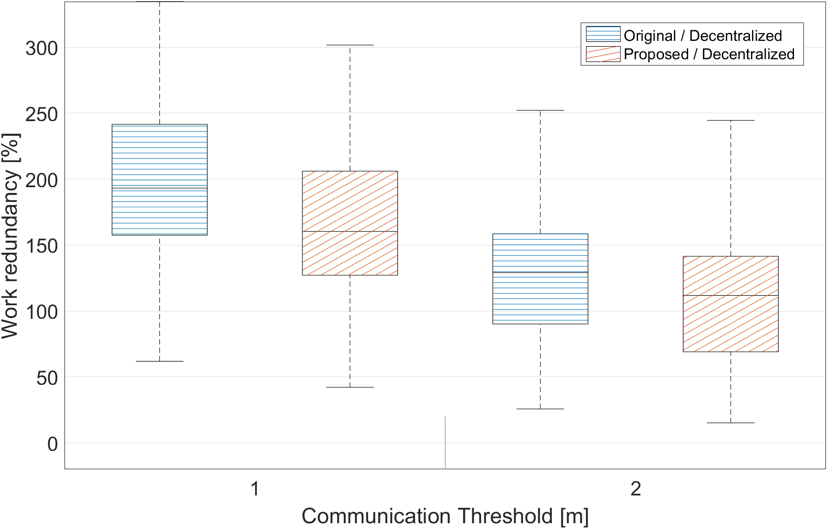}
    \caption{Work redundancy of the agents due to decentralized control.}\label{fig: Surplus work}
\end{figure}

\section{Conclusion}\label{sec: conclusion}
This paper investigated the density-driven optimal control scheme for efficient and collaborative multi-agent coverage problems. Due to constraints in a given mission, such as the finite number of agents, limited operation time for each agent, and a communication range limit, uniform coverage may not be feasible, especially for a spacious domain. The proposed control scheme can be a practical solution to tackle these issues by incorporating the importance or priority of the domain as a reference density distribution. The proposed density-driven optimal controller is derived from the Lagrangian associated with the Wasserstein distance. For the LTI system, both the global optimality and the existence of the optimal control input are guaranteed with the analytic solution. For decentralized control, an efficient weight-sharing rule is proposed, which is more efficient than the existing method by reducing work redundancy between multiple agents. Moreover, the proposed D$^2$OC can provide energy-flexible control by reflecting initially available energy into the plan. Various simulation results were presented to compare the performance between the existing and proposed works. It is verified that the proposed density-driven optimal control scheme outperforms other existing methods both in efficiency, measured by the Wasserstein distance, and in computation time.

\bibliographystyle{IEEEtran}
\bibliography{references}
\end{document}